\pgfplotsset{compat=1.14}
\definecolor{colgreen}{RGB}{ 17,128,127}
\newcommand\Tstrut{\rule{0pt}{2.6ex}}       
\newtheorem{theorem}{Theorem}
\newtheorem{lemma}[theorem]{Lemma}
\definecolor{codegray}{rgb}{0.5,0.5,0.5}
\lstdefinestyle{pseudocode}{
  language=c,
  showspaces=false,
  basicstyle={\ttfamily \scriptsize},
  numberstyle=\tiny\color{codegray},
  keywordstyle=\color{magenta},
  numbers=left,
  numbersep=5pt,
  mathescape=true,
  numberblanklines=false,
  morekeywords={uint8_t,uint16_t,int16_t,in},
}
\lstdefinestyle{pcode}{
  style=pseudocode,
  basicstyle={\ttfamily \normalsize},
}
\lstdefinestyle{pcodesmall}{
  style=pseudocode,
  basicstyle={\ttfamily \small},
}
\newcommand{\pcode}[2][]{\textnormal{\lstinline[style=pcode,#1]|#2|}}
\newcommand{\pcodes}[2][]{\textnormal{\lstinline[style=pseudocode,#1]|#2|}}
\lstdefinestyle{SQLf}{
  language=SQL,
  showspaces=false,
  basicstyle={\ttfamily \footnotesize},
  morekeywords={REFERENCES},
  deletekeywords={YEAR},
  escapeinside={<@}{@>},
  mathescape=true,
}
\lstdefinestyle{xquery}{
  language=python,
  showspaces=false,
  basicstyle={\ttfamily \footnotesize},
  numberstyle=\tiny\color{codegray},
  keywordstyle=\color{magenta},
  morekeywords={for,where,and},
  numbersep=5pt,
  numberblanklines=false,
  mathescape=true,
  escapeinside={(@}{@)}
}
\newcommand{\pstr}[1]{\textcolor{blue}{#1}}
\newcommand{\ptstr}[1]{\texttt{\textcolor{blue}{#1}}}
\newcommand{\vtstr}[1]{\texttt{\textcolor{red}{\underline{#1}}}}
\newcommand{\ptbstr}[1]{\textbf{\texttt{\textcolor{blue}{#1}}}}
\newcommand{\vtbstr}[1]{\textbf{\texttt{\textcolor{red}{\underline{#1}}}}}
\pgfplotsset{
  rcas/.style={
    color=colxb,
    mark=o,
  },
  casxml/.style={
    color=colxc,
    mark=triangle*,
  },
  caspv/.style={
    color=colx6,
    mark=diamond,
  },
  caspvs/.style={
    color=colx6,
    mark=diamond*,
  },
  casvp/.style={
    color=colxa,
    mark=square,
  },
  casvps/.style={
    color=colxa,
    mark=square*,
  },
  casbw/.style={
    color=colx1,
    mark=o,
  },
  caslw/.style={
    color=colxd,
    mark=triangle,
  },
  caszo/.style={
    color=orange,
    mark=triangle,
  },
  casseq/.style={
    color=colx4,
    mark=x,
  },
  lineModel/.style={
    color=colx9,
    mark=o,
  },
  experiment/.style={
    height=45mm,
    width=80mm,
    ticklabel style={font=\small},
    xlabel/.append style={font=\small},
    ylabel/.append style={font=\small},
    legend columns=4,
    legend style={
      at={(0.0,1.03)},
      anchor=south west,
      draw=none,
      cells={anchor=west},
    },
  },
  selectivity x axis/.style={
    xticklabel style={
      /pgf/number format/fixed,
      /pgf/number format/precision=2
    },
  },
}
\pgfplotsset{
  idy/.style={
    red,
    fill=white,
    postaction={
      pattern=crosshatch dots,
      pattern color=red,
    },
  },
  ipv/.style={
    blue,
    fill=white,
    postaction={
      pattern=north east lines,
      pattern color=blue,
    },
  },
  ivp/.style={
    blue,
    fill=white,
    postaction={
      pattern=north west lines,
      pattern color=blue,
    },
  },
  ione/.style={
    brown!60!black,
    fill=white,
    postaction={
      pattern=horizontal lines,
      pattern color=brown!60!black,
    },
  },
  itwo/.style={
    brown!60!black,
    fill=white,
    postaction={
      pattern=crosshatch,
      pattern color=brown!60!black,
    },
  },
}
\pgfplotsset{
  barRCAS/.style={
    colxb,
    fill=white,
    mark=none,
    postaction={
      pattern=crosshatch dots,
      pattern color=colxb,
    },
  },
  barZO/.style={
    colxc,
    fill=white,
    mark=none,
    postaction={
      pattern=north east lines,
      pattern color=colxc,
    },
  },
  barLW/.style={
    colxd,
    fill=white,
    mark=none,
    postaction={
      pattern=north west lines,
      pattern color=colxd,
    },
  },
  barPV/.style={
    colx6,
    fill=white,
    mark=none,
    postaction={
      pattern=horizontal lines,
      pattern color=colx6,
    },
  },
  barVP/.style={
    colxa,
    fill=white,
    mark=none,
    postaction={
      pattern=crosshatch,
      pattern color=colxa,
    },
  },
  barXML/.style={
    colxc,
    fill=white,
    mark=none,
    postaction={
      pattern=dots,
      pattern color=colxc,
    },
  },
  barP/.style={
    colxh!40!black,
    fill=white,
    mark=none,
    postaction={
      pattern=north east lines,
      pattern color=colxh!60!black,
    },
  },
  barV/.style={
    red,
    fill=white,
    mark=none,
    postaction={
      pattern=crosshatch,
      pattern color=red,
    },
  },
  barL/.style={
    colxf!40!black,
    fill=white,
    mark=none,
    postaction={
      pattern=crosshatch dots,
      pattern color=colxf!60!black,
    },
  },
  barModel/.style={
    colx9,
    fill=white,
    mark=none,
    postaction={
      pattern=north east lines,
      pattern color=colx9,
    },
  },
  barStructure/.style={
    colxg!30!black,
    fill=colxg,
    mark=none,
  },
}
\pgfplotsset {
  col2/.style={
    height=33mm,
    width=50mm,
    ticklabel style={font=\scriptsize},
    xlabel near ticks,
    ylabel near ticks,
    ylabel/.append style={align=center},
    ylabel/.append style={font=\scriptsize},
    xlabel/.append style={font=\scriptsize},
  },
  table/readtable/.style={
    x=value_selectivity,
    y expr=\thisrow{runtime_mus}/1000,
    col sep=comma,
  },
  table/tabstyle/.style={
    x=value_selectivity,
    col sep=comma,
  },
}
\definecolor{colx1}{RGB}{252, 26,135}  
\definecolor{colx2}{RGB}{ 11, 36,251}  
\definecolor{colx3}{RGB}{ 17,128,127}  
\definecolor{colx4}{RGB}{183,108, 50}  
\definecolor{colx5}{RGB}{253,128, 35}  
\definecolor{colx6}{RGB}{127, 15,126}  
\definecolor{colx7}{RGB}{200,  0, 20}  
\definecolor{colx8}{RGB}{191,191,191}  
\definecolor{colx9}{RGB}{100,  0,  4}  
\definecolor{colxa}{RGB}{  0,139, 60}  
\definecolor{colxb}{RGB}{  0, 98,139}  
\definecolor{colxc}{RGB}{242,112, 36}  
\definecolor{colxd}{RGB}{222, 23, 23}  
\definecolor{colxe}{RGB}{211,211,211}  
\definecolor{colxf}{RGB}{236,217,198}  
\definecolor{colxg}{RGB}{ 70,192,111}  
\definecolor{colxh}{RGB}{  0,185,242}  
\renewcommand{\tableofcontents}{\scriptsize\@starttoc{toc}}
\pgfplotsset{compat=1.14,
  /pgfplots/ybar legend/.style={
    /pgfplots/legend image code/.code={%
      \draw[##1,/tikz/.cd,yshift=-0.25em]
      (0cm,0cm) rectangle (7pt,1.1em);
    },
  },
}
\newcommand{\rev}[1]{#1}
\begin{document}

\title{Dynamic Interleaving of Content and Structure for\\Robust
Indexing of  Semi-Structured Hierarchical Data (Extended Version)}
\vldbTitle{Dynamic Interleaving of Content and Structure for Robust
Indexing of  Semi-Structured Hierarchical Data}

\vldbAuthors{Kevin Wellenzohn, Michael H.~B{\"o}hlen, Sven Helmer}
\vldbDOI{https://doi.org/10.14778/3401960.3401963}
\vldbNumber{10}
\vldbVolume{13}
\vldbYear{2020}

\numberofauthors{3}
\author{
\alignauthor
Kevin Wellenzohn \\
       \affaddr{University of Zurich}\\
       \email{\href{mailto:wellenzohn@ifi.uzh.ch}{wellenzohn@ifi.uzh.ch}}
\and
\alignauthor
Michael H. B{\"o}hlen \\
       \affaddr{University of Zurich}\\
       \email{\href{mailto:boehlen@ifi.uzh.ch}{boehlen@ifi.uzh.ch}}
\and
\alignauthor
Sven Helmer\\
       \affaddr{University of Zurich}\\
       \email{\href{mailto:helmer@ifi.uzh.ch}{helmer@ifi.uzh.ch}}
}

\additionalauthors{}
\pagenumbering{arabic}

\maketitle

\begin{abstract}
  We propose a robust index for semi-structured hierarchical data that
  supports content-and-structure (CAS) queries specified by path and
  value predicates. At the heart of our approach is a novel dynamic
  interleaving scheme that merges the path and value dimensions of
  composite keys in a balanced way.  We store these keys in our
  trie-based Robust Content-And-Structure index, which efficiently
  supports a wide range of CAS queries, including queries with
  wildcards and descendant axes. Additionally, we show important
  properties of our scheme, such as robustness against varying
  selectivities, and demonstrate improvements of up to two orders of
  magnitude over existing approaches in our experimental evaluation.
\end{abstract}


\section{Introduction}
\label{sec:intro}

A lot of the data in business and engineering applications is
semi-structured and inherently hierarchical.  Typical examples are
bills of materials (BOMs) \cite{RB15}, enterprise asset hierarchies
\cite{JF13}, and enterprise resource planning applications
\cite{JF15}.  A common type of queries on such data are
content-and-structure (CAS) queries \cite{CM15}, containing a
\emph{value predicate} on the \emph{content} of some attribute and a
\emph{path predicate} on the location of this attribute in the
\emph{hierarchical structure}.

As real-world BOMs grow to tens of millions of nodes \cite{JF13}, we
need dedicated CAS access methods to support the efficient processing
of CAS queries.  Existing CAS indexes often lead to large intermediate
results, since they either build separate indexes for, respectively,
content and structure \cite{CM15} or prioritize one dimension over the
other (i.e., content over structure or vice versa)
\cite{oak19,BC01,JL15}.  We propose a \emph{well-balanced integration}
of paths and values in a single index that provides \emph{robust}
performance for CAS queries, meaning that the index prioritizes
neither paths nor values.

We achieve the balanced integration of the path and value dimension
with composite keys that \emph{interleave} the bytes of a path and a
value. Interleaving is a well-known technique applied to
multidimensional keys, for instance Nishimura et al.~look at a family
of bit-merging functions \cite{SN17} that include the $c$-order
\cite{SN17} and the $z$-order \cite{GM66,JO84} space-filling curves.
Applying space-filling curves on paths and values is subtle, though,
and can result in poor query performance because of varying key
length, different domain sizes, and the skew of the data.  The
$z$-order curve, for example, produces a poorly balanced partitioning
of the data if the data contains long common prefixes \cite{VM99}.  The
paths in a hierarchical structure exhibit this property: they have, by
their very nature, long common prefixes. The issue with common
prefixes is that they do not help to partition the data, since they
are the same for all data items. However, the first byte following a
longest common prefix does exactly this: it distinguishes different
data items. We call such a byte a \emph{discriminative byte}.  The
distribution of discriminative path and value bytes in an interleaved
key determines the order in which an index partitions the data and,
consequently, how efficiently queries can be evaluated.  The $z$-order
of a composite key often clusters the discriminative path and value
bytes, instead of interleaving them. This leads to one dimension---the
one whose discriminative bytes appear first---to be prioritized over
the other, precluding robust query performance.

We develop a \emph{dynamic interleaving} scheme that interleaves the
discriminative bytes of paths and values in an alternating way.  This
leads to a well-balanced partitioning of the data with a robust query
performance.  Our dynamic interleaving is \emph{data-driven} since the
positions of the discriminative bytes depend on the distribution of
the data.  We use the dynamic interleaving to define the \emph{Robust
  Content-and-Structure (RCAS)} index for semi-structured hierarchical
data.  \rev{We build our RCAS index as an \rev{in-memory} trie
  data-structure \cite{VL13} to efficiently support the basic search
  methods for CAS queries: \emph{range searches} and \emph{prefix
    searches}.  Range searches enable value predicates that are
  expressed as a value range and prefix searches allow for path
  predicates that contain wildcards and descendant axes.} Crucially,
tries in combination with dynamically interleaved keys allow us to
efficiently evaluate path and value predicates simultaneously.
\rev{We provide an efficient bulk-loading algorithm for RCAS that
  scales linearly with the size of the dataset.  Incremental
  insertions and deletions are not supported.}

Our main contributions can be summarized as follows:
\begin{itemize}

\item We develop a \emph{dynamic interleaving} scheme to interleave
  paths and values in an alternating way using the concept of
    \emph{discriminative bytes}.  We show how to compute this
    interleaving by partitioning the data. We prove that our dynamic
    interleaving is robust against varying selectivities
    (Section~\ref{sec:dynint}).

\item We propose the \rev{in-memory}, trie-based \emph{Robust
  Content-and-Structure (RCAS) index} for semi-structured hierarchical
    data. The RCAS achieves its robust query performance by a
    well-balanced integration of paths and values via our dynamic
    interleaving scheme (Section \ref{sec:rcas}).

\item Our RCAS index supports a broad spectrum of CAS queries that
  include wildcards and the descendant axis.  We show how to evaluate
    CAS queries through a combination of range and prefix searches on
    the trie-based structure of the RCAS index (Section
    \ref{sec:querying}).

\item An exhaustive experimental evaluation with real-world and
  synthetic datasets shows that RCAS delivers robust query
    performance.  We get improvements of up to two orders of magnitude
    over existing approaches (Section \ref{sec:experiments}).
\end{itemize}


\section{Running Example}
\label{sec:rexample}


\rev{We consider a company that stores the bills of materials (BOMs)
  of its products. BOMs represent the hierarchical assembly of
  components to final products.  Each BOM node is stored as a tuple in
  a relational table, which is common for hierarchies, see, e.g.,
  SAP's storage of BOMs \cite{RB15,JF13,JF15} and the Software
  Heritage Archive \cite{RC17,AP19}.  A CAS index is used to
  efficiently answer queries on the structure (location of a node in
  the hierarchy) and the content of an attribute (e.g., the weight or
  capacity).  The paths of all nodes in the BOM that have a value for
  the indexed attribute as well as the value itself are indexed in the
  CAS index.  The index is read-only, updated offline, and kept in
  main memory.}

Figure \ref{fig:bom} shows the hierarchical representation of a BOM
for three products.  The components of each product are organized
under an \texttt{item} node.  Components can have attributes to record
additional information, e.g., the weight of a battery.  Attributes are
represented by special nodes that are prefixed with an \texttt{@} and
that have an additional value.  For example, the weight of the
rightmost battery is
250'714 grams and its capacity is 80000 Wh.

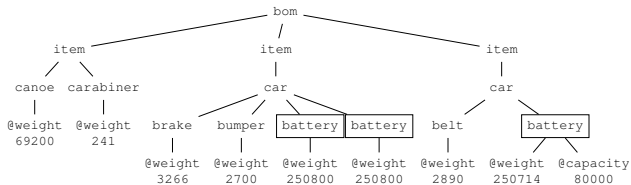
\begin{figure}[htb]
  \centering
  \scalebox{.9}{
  {\begin{forest}
    emptypath/.style={
      edge=dotted,
    },
    bomtree,
    for tree={
      s sep-=1mm,
      font={\ttfamily \scriptsize},
    },
    [bom
      [item
        [canoe
          [@weight\\69200
          ]
        ]
        [carabiner
          [@weight\\241
          ]
        ]
      ]
      [item
        [car
          [brake
            [@weight\\3266
            ]
          ]
          [bumper
            [@weight\\2700
            ]
          ]
          [battery,draw
            [@weight\\250800
            ]
          ]
          [battery,draw
            [@weight\\250800
            ]
          ]
        ]
      ]
      [item
        [car
          [belt
            [@weight\\2890
            ]
          ]
          [battery,draw
            [@weight\\250714
            ]
            [@capacity\\80000
            ]
          ]
        ]
      ]
    ]
  \end{forest}}}
  \caption{Example of a bill of materials (BOM).}
  \label{fig:bom}
\end{figure}

Next, we look at an example CAS query. We roughly follow the syntax of
query languages for semistructured data, such as XQuery~\cite{Katz03}
or JSONiq~\cite{DF13}, utilizing simple FLWOR expressions.

\begin{example}
  To reduce the weight of cars we look for all heavy car parts, i.e.,
  parts weighing at least 50 kilograms (``\texttt{//}'' matches a node
  and all its descendants in a hierarchical structure):

  \medskip \hspace{-15pt} {\footnotesize
    \begin{tabular}{ll}
      $Q$:
      & \texttt{\textbf{for} \$c \textbf{in}
        {\color{blue}/bom/item/car//}} \\
      & \texttt{\textbf{where} \$c/{\color{red}@weight >= 50000}} \\
      & \texttt{\textbf{return} \$c} \\
    \end{tabular}}
\end{example}

The answer to query $Q$ are the three framed nodes
in Figure~\ref{fig:bom}.  Our goal is an index
that guides us as quickly as possible to these nodes.  Indexes on
either paths or values do not perform well.  An index built for only
the values of weight nodes also accesses the node for the canoe.  A
purely structural index for the paths additionally has to look at
the weight of other car parts.  Our RCAS index considers values and
paths together to get a good query performance.


\section{Related Work}
\label{sec:rw}

We begin with a review of existing CAS indexes
\cite{BC01,RK04,HL06,CM15,JL15}.  IndexFabric \cite{BC01} prioritizes
the structure of the data over its values.  It stores the concatenated
path and value of a key in a disk-optimized PATRICIA trie \cite{DM68}
\rev{that supports incremental updates (i.e., inserts and deletes)}.
IndexFabric does not offer robust CAS query performance since a CAS
query must fully evaluate a query's path predicate before it can
evaluate its value predicate.  This leads to large intermediate
results if the path predicate is not selective.

The hierarchical database system Apache Jackrabbit Oak \cite{oak19}
implements a CAS index that prioritizes values over paths. Oak indexes
(value $v$, path $p$)-pairs in a DataGuide-like index \rev{that
  supports updates}.  For each value $v$, Oak stores a DataGuide
\cite{RG97} of all paths $p$ that have this particular value $v$.
Query performance is poor if the value predicate is not selective
because the system must search many DataGuides.

The CAS index of Microsoft Azure's DocumentDB (now Cosmos DB)
concatenates paths and values \cite{JL15} and stores the result in a
Bw-tree \cite{JL13} \rev{that supports updates}.  Depending on the
expected query type(s) (point or range queries), the system either
stores forward keys (e.g., \texttt{/a/b/c}) or reverse keys (e.g.,
\texttt{c/b/a}). To reduce the space requirements, forward and
reverse keys are split into trigrams (three consecutive node labels).
During the evaluation of a CAS query these trigrams must be joined and
matched against the query, which is slow.  Moreover, choosing forward
or reverse keys prioritizes structure over values or
vice-versa.

Mathis et al.~\cite{CM15} propose a CAS index that consists of two
index structures: a B-tree to index the values and a structural
summary (e.g., a DataGuide \cite{RG97}) to index the structure of the
data.  The DataGuide assigns an identifier (termed PCR) to each
distinct path in the documents. The B-tree stores the values along
with the PCRs.  The path and value predicates of a CAS query are
independently evaluated on the DataGuide and the B-tree, and the
intermediate results are joined on the PCR.  This is expensive if the
intermediate results are large (i.e., at least one predicate is not
selective) but the final result is small.  \rev{Updates are supported
  and are executed on the B-tree as well as the DataGuide.}

Kaushik et al.~\cite{RK04} present an approach that joins inverted
lists for answering CAS queries.  They use a 1-index \cite{TM99} to
evaluate path predicates and B-trees to evaluate value predicates.
This approach evaluates both predicates independently and exhibits the
same problems as \cite{CM15}. \rev{Updates are not
discussed.}

FLUX \cite{HL06} computes a Bloom filter for each path into which its
labels are hashed and stores these Bloom filters along with the values
in a B-tree. Query evaluation proceeds as follows.  The value
predicate is matched on the B-tree and for each matched value the
corresponding Bloom filter $C$ is compared to a Bloom filter $Q$
built for the query path. If each bit that is set in $Q$ is also set
in $C$, the path is a possible match that needs to be double-checked
through database accesses.  Value predicates that are not selective
produce large intermediate results. \rev{Updates are not discussed.}

Some document databases for semi-structured, hierarchical data
(MongoDB \cite{mongo}, CouchDB \cite{couch}, and AsterixDB
\cite{SA14}) use pure value indexes (e.g., standard B-trees) that
index the content of documents but not their structure. They create an
index on a predefined path (e.g., \texttt{/person/name}) and only
index the corresponding values. They cannot answer CAS queries with
arbitrary path predicates.

Besides pure value indexes there are also pure structure indexes
that focus on complex twig queries with different axes
(ancestor, descendant, sibling, etc.).  DeltaNI \cite{JF13} and Order
Indexes \cite{JF17} are recent proposals in this area.  Pure structure
indexes cannot efficiently answer CAS queries that also include value
predicates.

Our RCAS index integrates paths and values by interleaving them. This
is similar to the bit-merging family of space-filling curves that
combine the binary representation of a key's dimensions.  We compare
our approach to two representatives: the $c$-order curve \cite{SN17}
and the $z$-order curve \cite{GM66,JO84}.  The $c$-order curve is
obtained by concatenating dimensions, which prioritizes one of the
dimensions.  The selectivity of the predicate on the prioritized
dimension determines the query performance. If it is high and the
other selectivity is low, the $c$-order curve performs badly.  The
$z$-order curve is a space-filling curve that is used, among others,
by UB-trees \cite{FR00} and k-d tries \cite{BN08,JO84,HS06}.  It is
obtained by the bit-wise interleaving of dimensions.  The $z$-order
curve produces an unbalanced partitioning of the data with poor query
performance if the data contains long common prefixes.  Markl calls
this the ``puff-pastry effect'' \cite{VM99} because the query
performance deteriorates to that of a $c$-order curve that fully
orders one dimension after another.  The Variable UB-tree \cite{VM99}
uses a pre-processing step to encode the data in such a way that the
puff-pastry effect is avoided.  The encoding is not prefix-preserving
and cannot be used in our CAS index. We need prefix searches to
evaluate path predicates.  \rev{The $c$-order and $z$-order curves are
  \emph{static} interleaving schemes that do not take the data
  distribution into account.  Indexes based on static schemes can be
  updated efficiently since insertions and deletions do not affect
  existing interleavings.  The Variable UB-tree does not support
  incremental updates \cite{VM99} since its encoding function adapts
  to the data distribution and must be recomputed whenever the data
  changes.  Similarly, our data-driven dynamic interleaving does not
  support incremental updates since the position of the discriminative
  bytes may change when keys are inserted or deleted.}

QUILTS \cite{SN17} devises a \rev{static} interleaving scheme for a
specific query workload. \rev{Index updates, although not discussed,
  would work as for other static schemes (e.g., $c$- and $z$-order).}
Our dynamic interleaving adapts to the data distribution rather than a
specific query workload.  We do not optimize a specific workload but
want to support a wide range of queries in a robust way, including
ad-hoc queries.


\section{Background}
\label{sec:background}

\textbf{Composite Keys.} We use composite keys that consist of a path
dimension $P$ and value dimension $V$ to index attributes in
hierarchical data.  Neither paths nor values nor the combination
of paths and values need to be unique in a database. Composite keys
can be extracted from popular semi-structured hierarchical data
formats, such as JSON and XML.

\begin{definition}(\emph{Composite Key}) %
  A composite key $k$ states that a node with path $k.P$ in a database
  has value $k.V$.
\end{definition}

Let $D \in \{P, V\}$ be the path or value dimension. We write $k.D$ to
access $k$'s path (if $D = P$) or value (if $D = V$).
The value dimension can be of any
primitive data type.  In the remainder of this paper we use one byte
ASCII characters for the path dimension and hexadecimal numbers for
the value dimension.

\begin{example}
  In our running example we index attribute \texttt{@weight}.
  Table~\ref{tab:ex2} shows the composite keys for the
  \texttt{@weight} attributes from the BOM in Figure \ref{fig:bom}.
  Since only the \texttt{@weight} attribute is indexed, we omit the
  label \texttt{@weight} in the paths in Table~\ref{tab:ex2}. The
  values of the \texttt{@weight} attribute are stored as 32 bit
  unsigned integers.
\end{example}

The set of composite keys in our running example is denoted by
$\mathsf{K}^{1..7} = \{\mathsf{k}_1, \mathsf{k}_2, \ldots,
\mathsf{k}_7 \}$, see Table \ref{tab:ex2}.  We use a
\textsf{sans-serif} font to refer to concrete values. Further, we use
notation $\mathsf{K}^{2,5,6,7}$ to refer to $\{\mathsf{k}_2,
\mathsf{k}_5, \mathsf{k}_6, \mathsf{k}_7\}$.

\begin{table}[htb] \centering
\caption{A set $\mathsf{K}^{1..7} = \{ \mathsf{k}_1, \ldots,
\mathsf{k}_7 \}$ of composite
keys. The values are stored as 32 bit unsigned integers.}
\label{tab:ex2}
\begin{tikzpicture}[
    tick/.style={
      font={\ttfamily \scriptsize},
      anchor=north,
    },
  ]
  \node[inner sep=0, outer sep=0,anchor=south west] (table) at (0,0) {
  \begin{tabular}{l|lc|}
    \cline{2-3}
    \multicolumn{1}{c|}{}
      & \multicolumn{1}{c}{Path Dimension $P$}
      & \multicolumn{1}{c|}{Value Dimension $V$} \\
    \cline{2-3}
    $\mathsf{k}_1$
      & \texttt{/bom/item/canoe\$}
      & \texttt{00\,01\,0E\,50} \\
    $\mathsf{k}_2$
      & \texttt{/bom/item/carabiner\$}
      & \texttt{00\,00\,00\,F1} \\
    $\mathsf{k}_3$
      & \texttt{/bom/item/car/battery\$}
      & \texttt{00\,03\,D3\,5A} \\
    $\mathsf{k}_4$
      & \texttt{/bom/item/car/battery\$}
      & \texttt{00\,03\,D3\,B0} \\
    $\mathsf{k}_5$
      & \texttt{/bom/item/car/belt\$}
      & \texttt{00\,00\,0B\,4A} \\
    $\mathsf{k}_6$
      & \texttt{/bom/item/car/brake\$}
      & \texttt{00\,00\,0C\,C2} \\
    $\mathsf{k}_7$
      & \texttt{/bom/item/car/bumper\$}
      & \texttt{00\,00\,0A\,8C} \\
    \cline{2-3}
  \end{tabular}
  };
  \foreach \x in {0,2,...,22} {
    \pgfmathsetmacro{\xc}{\x * 0.190 + 1.0}
    \pgfmathtruncatemacro{\xr}{\x + 1}
    \draw (\xc,0) -- (\xc,-0.1);
    \node[tick] at (\xc,-0.05) {\rev{\xr}};
  }
  \foreach \x in {0,...,3} {
    \pgfmathsetmacro{\xc}{\x * 0.43 + 6.05}
    \pgfmathtruncatemacro{\xr}{\x + 1}
    \draw (\xc,0) -- (\xc,-0.1);
    \node[tick] at (\xc,-0.05) {\rev{\xr}};
  }
\end{tikzpicture}
\end{table}

\smallskip \textbf{Querying.} Content-and-structure (CAS) queries
contain a path predicate and value predicate \cite{CM15}. \rev{The
path predicate is expressed as a query path $q$ that may include
\texttt{//} to match a node itself and all its descendants, and the
wildcard \texttt{*} to match all of a node's children. The latter is
useful for data integrated from sources using different terminology
(e.g., \texttt{product} instead of \texttt{item} in Fig.~\ref{fig:bom}).}

\begin{definition}(\emph{Query Path}) \label{def:querypath}
  A query path $q$ is denoted by
  $q = e_1\,\lambda_1\,e_2\,\lambda_2 \,\ldots\,\lambda_{m-1}\,e_m$.
  Each $e_i$, $i \leq m$, is either the path separator \texttt{/} or
  the descendant-or-self axis \texttt{//} that matches zero to any
  number of descendants.  The final path separator $e_m$ is optional.
  $\lambda_i$, $i < m$, is either a label or the wildcard
  \texttt{*} that matches any label.
\end{definition}

\begin{definition}(\emph{CAS Query})
  A CAS query $Q(q,\theta)$ consists of a query path $q$ and a value
  predicate $\theta$ on an attribute $A$, where  $\theta$ is a simple
  comparison $\theta = A \,\Theta\, v$ or a range comparison $\theta =
  v_l \,\Theta\, A \,\Theta'\, v_h$ where $\Theta, \Theta' \in
  \{=,<,>,\leq,\ge\}$.  Let $K$ be a set of composite keys.  CAS query
  $Q$ returns all composite keys $k \in K$ such that $k.P$ satisfies
  $q$ and $k.V$ satisfies $\theta$.
\end{definition}

\begin{example} \label{ex:casquery}
  The CAS query from Section \ref{sec:rexample} is expressed as
  $Q(\texttt{/bom/item/car//}, \texttt{@weight} \ge 50000)$ and
  returns all car parts weighing more than 50000 grams in Figure
  \ref{fig:bom}.  \rev{CAS query $Q(\texttt{/bom/*/car/battery},
  \texttt{@capacity} = 80000)$ looks for all car batteries
  that have a capacity of 80kWh. The wildcard $\texttt{*}$ matches any child
  of $\texttt{bom}$ (only \texttt{item} children exist in our example).}
\end{example}

\textbf{Representation of Keys.} Paths and values are prefix-free
byte strings as illustrated in Table~\ref{tab:ex2}.  To get
prefix-free byte strings we append the end-of-string character (ASCII
code \texttt{0x00}, here denoted by \texttt{\$}) to each path.  This
guarantees that no path is prefix of another path.  Fixed-length byte
strings (e.g., 32 bit numbers) are prefix-free because of the fixed
length.

Let $s$ be a byte-string, then $\textsf{len}(s)$ denotes the length of
$s$ and $s[i]$ denotes the $i$-th byte in $s$.  The left-most byte of
a byte-string is byte one.  $s[i] = \epsilon$ is the empty string if
$i > \textsf{len}(s)$. $s[i,j]$ denotes the substring of $s$ from
position $i$ to $j$ and $s[i,j] = \epsilon$ if $i > j$.

\medskip \textbf{Interleaving of Composite Keys.} We integrate path
$k.P$ and value $k.V$ of a key $k$ by interleaving them.
Figure~\ref{fig:interleaving} shows various interleavings of key
$\mathsf{k}_6$ from Table~\ref{tab:ex2}.  \rev{Value bytes are
underlined and shown in red, path bytes are shown in blue.} The first
two rows show the two $c$-order curves: path-value and value-path
concatenation ($I_{PV}$ and $I_{VP}$).  The byte-wise interleaving
$I_{BW}$ in the third row interleaves one value byte with one path
byte.  Note that none of these interleavings is well-balanced.  The
byte-wise interleaving is not well-balanced, since all value-bytes are
interleaved with parts of the common prefix of the paths
(\texttt{/bom/item/ca}).  In our experiments we use the
surrogate-based extension proposed by Markl~\cite{VM99} to more evenly
interleave dimensions of different lengths (see
Section~\ref{sec:experiments}).

\begin{figure}[htb]
\centering
\begin{tikzpicture}
\node[anchor=north west] (table) at (0,0) {
${\normalsize \begin{alignedat}{3}
  & \text{\small Approach} && && \hspace{1.7cm}\text{\small
  Interleaving of Key} \\
  \hline
  \rule{0pt}{2.6ex}
  & I_{PV}(\mathsf{k}_6) && = &&
    \ptstr{/bom/item/car/brake\$}\,
    \vtstr{00\,00\,0C\,C2}
    \\
  & I_{VP}(\mathsf{k}_6) && =\,\, &&
    \vtstr{00\,00\,0C\,C2}\,
    \ptstr{/bom/item/car/brake\$}
    \\
  & I_{BW}(\mathsf{k}_6) && = &&
    \vtstr{00}\,
    \ptstr{/}\,
    \vtstr{00}\,
    \ptstr{b}\,
    \vtstr{0C}\,
    \ptstr{o}\,
    \vtstr{C2}\,
    \ptstr{m}
    \ptstr{/item/car/brake\$}
    \\
  \hline
\end{alignedat}}$
};
\end{tikzpicture}
\caption{Key $\mathsf{k}_6$ is interleaved using
different approaches.}
\label{fig:interleaving}
\end{figure}


\section{Dynamic Interleaving}
\label{sec:dynint}

Our dynamic interleaving is a \emph{data-driven} approach to
interleave the paths and values of a set of composite keys $K$.  It
adapts to the specific characteristics of paths and values, such as
varying length, differing domain sizes, and the skew of the data.  To
this end, we consider the distribution of \emph{discriminative bytes}
in the indexed data.

\begin{definition}(\emph{Discriminative Byte}) \label{def:discbyte}
  The discriminative byte of a set of composite keys $K$ in dimension
  $D \in \{P,V\}$ is the position of the first byte in dimension $D$
  for which not all keys are equal:
  \begin{align*}
    \textsf{dsc}(K&,D) = m \text{ iff} \\
      & \exists k_i, k_j \in K, i \neq j
        ( k_i.D[m] \neq k_j.D[m]) \text{ and} \\
      & \forall k_i, k_j\in K, l<m
        (k_i.D[l] = k_j.D[l])
  \end{align*}
  If all
  values of dimension $D$ in $K$ are equal, the
  discriminative byte does not exist.  In this case we define
  $\textsf{dsc}(K, D) = \textsf{len}(k_i.D) + 1$ for some
  $k_i \in K$. $\hfill\Box$
\end{definition}

\begin{example}
  Table \ref{tab:discbyte} illustrates the position of the
  discriminative bytes for the path and value dimensions for various
  sets of composite keys $K$.
  \begin{table}[htbp] \centering
    \caption{Illustration of the discriminative bytes for
      $\mathsf{K}^{1..7}$ from Table \ref{tab:ex2} and various subsets
      of it.}
    \label{tab:discbyte}
    \begin{tabular}{l|cc}
      \multicolumn{1}{l|}{Composite Keys $K$}
        & \multicolumn{1}{c}{$\textsf{dsc}(K,P)$}
        & \multicolumn{1}{c}{$\textsf{dsc}(K,V)$}
        \\ \hline
      $\mathsf{K}^{1..7}$
        & 13
        & 2
        \Tstrut
        \\
      $\mathsf{K}^{2,5,6,7}$
        & 14
        & 3 \\
      $\mathsf{K}^{5,6,7}$
        & 16
        & 3 \\
      $\mathsf{K}^{6}$
        & 21
        & 5 \\
    \end{tabular}
  \end{table}
\end{example}

Discriminative bytes are crucial during query evaluation since at the
position of the discriminative bytes the search space can be narrowed
down.  We alternate in a round-robin fashion between discriminative
path and value bytes in our dynamic interleaving.  Note that in order
to determine the dynamic interleaving of a key $k$, which we denote by
$I_{\text{DY}}(k,K)$, we have to consider the set of keys $K$ to which
$k$ belongs and determine where the keys in $K$ differ from each other
(i.e., where their discriminative bytes are located).  Each
discriminative byte partitions $K$ into subsets, which we recursively
partition further.

\subsection{Partitioning by Discriminative Bytes}

The partitioning of a set of keys $K$ groups composite keys together
that have the same value for the discriminative byte in dimension $D$.
Thus, $K$ is split into at most $2^8$ non-empty partitions, one
partition for each value (\texttt{0x00} to \texttt{0xFF}) of the
discriminative byte of dimension $D$.

\begin{definition}(\emph{$\psi$-Partitioning})
  \label{def:partitioning}
  $\psi(K, D) = \{ K_1, \ldots, K_m\}$ is the
  $\psi$-partitioning of composite keys $K$ in dimension $D$
  iff all partitions are non-empty
    ($K_i \neq \emptyset \text{ for } 1 \leq i \leq m$), the
    number $m$ of partitions is minimal, and:
    \begin{enumerate} \setlength{\itemsep}{0pt plus 1pt}
    \item All keys in partition $K_i \in \psi(K,D)$ have the same
      value for the discriminative byte of $K$ in dimension $D$:
      \vspace{-3pt}
      \begin{itemize}[leftmargin=10pt]
      \item[--]
        $\forall k_u,k_v \in K_i \left( k_u.D[\textsf{dsc}(K,D)] =
          k_v.D[\textsf{dsc}(K,D)] \right) $
      \end{itemize}

    \item The partitions are disjoint:
    \vspace{-3pt}
      \begin{itemize}[leftmargin=10pt]
      \item[--]
        $\forall K_i,K_j \in \psi(K, D) (
        K_i \neq K_j \Rightarrow K_i \cap
          K_j = \emptyset)$
      \end{itemize}

    \item The partitioning is complete:
      \vspace{-3pt}
      \begin{itemize}[leftmargin=10pt]
      \item[--]
        $K = \bigcup_{K_i \in \psi(K,D)}
          K_i$ $\hfill\Box$
      \end{itemize}

  \end{enumerate}
\end{definition}

Let $k \in K$ be a composite key.  We write $\psi_k(K, D)$ to denote
the $\psi$-\emph{partitioning of $k$} with respect to $K$ and dimension $D$,
i.e., the partition in $\psi(K, D)$ that contains key $k$.

\begin{example}
  Let $\mathsf{K}^{1..7}$ be the set of composite keys
  from Table~\ref{tab:ex2}.  The $\psi$-partitioning of
  selected sets of keys in dimension $P$ or $V$ is as follows:
  \begin{itemize}\itemsep=0pt
    \item $\psi(\mathsf{K}^{1..7},V)=\{\mathsf{K}^{2,5,6,7},
      \mathsf{K}^{1}, \mathsf{K}^{3,4}\}$
    \item $\psi(\mathsf{K}^{2,5,6,7},P)=\{\mathsf{K}^{2}, \mathsf{K}^{5,6,7}\}$
    \item $\psi(\mathsf{K}^{5,6,7},V)=\{\mathsf{K}^{5},
      \mathsf{K}^{6}, \mathsf{K}^{7}\}$
    \item $\psi(\mathsf{K}^{6},V) = \psi(\mathsf{K}^{6},P) =
      \{\mathsf{K}^{6}\}$
  \end{itemize}

  The $\psi$-partitioning of key $\mathsf{k}_6$ with respect to sets of keys and
  dimensions is as follows:
  \begin{itemize}\itemsep=0pt
    \item $\psi_{\mathsf{k}_6}(\mathsf{K}^{1..7}, V) = \mathsf{K}^{2,5,6,7}$
    \item $\psi_{\mathsf{k}_6}(\mathsf{K}^{6}, V) =
      \psi_{\mathsf{k}_6}(\mathsf{K}^{6}, P) = \mathsf{K}^{6}$.
    $\hfill\Box$
\end{itemize}
\end{example}


The crucial property of our partitioning is that the position of the
discriminative byte for dimension $D$ increases if we $\psi$-partition
$K$ in $D$. \rev{This \emph{monotonicity} property of the
$\psi$-partitioning holds since every partition is built based on the
discriminative byte and to partition an existing partition we need a
discriminative byte that will be positioned further down in the
byte-string.}  For the alternate dimension $\overline{D}$, i.e.,
$\overline{D} = P$ if $D = V$ and $\overline{D} = V$ if $D = P$, the
position of the discriminative byte remains unchanged or may increase.

\begin{lemma}(\emph{Monotonicity of Discriminative Bytes})
  \label{lemma:monotonicity}
  Let $K_i \in \psi(K,D)$ be one of the partitions
  of $K$ after partitioning in dimension $D$.  In dimension
  $D$, the position of the discriminative byte in $K_i$ is
  strictly greater than in $K$ while, in dimension
  $\overline{D}$, the discriminative byte is equal or greater than in
  $K$, i.e.,
  \begin{align*}
  \begin{split}
    K_i &\in \psi(K,D) \land K_i \subset K
    \Rightarrow\\
    & \textsf{dsc}(K_i, D) > \textsf{dsc}(K, D)
    \land \textsf{dsc}(K_i, \overline{D}) \ge
    \textsf{dsc}(K, \overline{D})
  \end{split}
  \end{align*}
\end{lemma}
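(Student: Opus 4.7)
The plan is to unwind the definitions of the discriminative byte and the $\psi$-partitioning, and then make two direct observations, one for each part of the conjunction. Let $m = \textsf{dsc}(K,D)$ and $m' = \textsf{dsc}(K,\overline{D})$.

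For the first inequality, I would argue that every key in $K_i$ agrees on bytes $1, 2, \ldots, m$ of dimension $D$. Bytes $1, \ldots, m-1$ agree for all keys in $K$ by the definition of $m = \textsf{dsc}(K,D)$, and this agreement is inherited by the subset $K_i \subseteq K$. Byte $m$ also agrees inside $K_i$: this is exactly condition (1) of Definition~\ref{def:partitioning}, which requires all keys in a $\psi$-partition to share the same value at position $\textsf{dsc}(K,D)$. Hence the first position at which the keys of $K_i$ differ in dimension $D$ must lie strictly to the right of $m$, so $\textsf{dsc}(K_i,D) > m = \textsf{dsc}(K,D)$.

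For the second inequality, the argument is even shorter: since $K_i \subseteq K$, any byte position $l < m'$ at which \emph{all} keys of $K$ agree in dimension $\overline{D}$ is also a position at which all keys of $K_i$ agree. Therefore the smallest position at which the keys of $K_i$ disagree in $\overline{D}$ is $\geq m'$, giving $\textsf{dsc}(K_i,\overline{D}) \geq \textsf{dsc}(K,\overline{D})$.

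The only subtlety, and what I expect to be the main (though modest) obstacle, is handling the degenerate case in which a dimension has no discriminative byte because all keys in the relevant set agree. Here Definition~\ref{def:discbyte} sets $\textsf{dsc}(\cdot,D) = \textsf{len}(k.D)+1$ for some key $k$. I would check that both inequalities still hold in this case: if all keys of $K$ already agree on dimension $D$ but $K_i \subsetneq K$ is nontrivial, then $K_i$ also has all keys agreeing on $D$, and since all keys share the same length (paths are prefix-free and values are fixed-length), the quantity $\textsf{len}(k.D)+1$ is the same for $K$ and $K_i$; but this case is excluded by $K_i \in \psi(K,D) \land K_i \subset K$, which requires at least two partitions and hence a genuine discriminative byte in $D$ for $K$. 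For the alternate dimension $\overline{D}$, degeneracy simply pushes $\textsf{dsc}(K_i,\overline{D})$ to $\textsf{len}+1$, which is still $\geq m'$. Once these boundary cases are verified, the lemma follows directly from the definitions.
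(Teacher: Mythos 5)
Your proof is correct and follows essentially the same route as the paper's: subset monotonicity gives $\textsf{dsc}(K_i,D')\ge\textsf{dsc}(K,D')$ in both dimensions, and condition~(1) of Definition~\ref{def:partitioning} forces agreement at byte $\textsf{dsc}(K,D)$ within $K_i$, upgrading the bound to a strict inequality in dimension $D$. Your explicit treatment of the degenerate ``no discriminative byte'' cases is a bit more careful than the paper's proof, but it is the same argument.
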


\begin{proof}
  The first line states that $K_i \subset K$ is one of the
  partitions of $K$.  From Definition~\ref{def:partitioning} it
  follows that the value $k.D[\textsf{dsc}(K,D)]$ is the same for
  every key $k \in K_i$.  From Definition~\ref{def:discbyte} it
  follows that $\textsf{dsc}(K_i,D) \neq \textsf{dsc}(K,D)$.  By
  removing one or more keys from $K$ to get $K_i$, the keys in $K_i$
  will become more similar compared to those in $K$. That means, it is
  not possible for the keys in $K_i$ to differ in a position $g <
  \textsf{dsc}(K,D)$.  Consequently, $\textsf{dsc}(K_i,D) \nless
  \textsf{dsc}(K,D)$ for any dimension $D$ (so this also holds for
  $\overline{D}$: $\textsf{dsc}(K_i,\overline{D}) \nless
  \textsf{dsc}(K,\overline{D})$).  Thus $\textsf{dsc}(K_i,D) >
  \textsf{dsc}(K,D)$ and $\textsf{dsc}(K_i,\overline{D}) \ge
  \textsf{dsc}(K,\overline{D})$.
\end{proof}

\begin{example}
  The discriminative path byte of $\mathsf{K}^{1..7}$ is $13$ while the
  discriminative value byte of $\mathsf{K}^{1..7}$ is $2$ as shown in
  Table~\ref{tab:discbyte}.  For partition $\mathsf{K}^{2,5,6,7}$, which is
  obtained by partitioning $\mathsf{K}^{1..7}$ in the value dimension, the
  discriminative path byte is $14$ while the discriminative value byte
  is $3$.  The positions of both discriminative bytes have increased.
  For partition $\mathsf{K}^{5,6,7}$, which is obtained by partitioning
  $\mathsf{K}^{2,5,6,7}$ in the path dimension, the discriminative path byte is
  $16$ while the discriminative value byte is $3$.  The position of
  the discriminative path byte has increased while the position of the
  discriminative value byte has not changed.
\end{example}

When computing the dynamic interleaving of a composite key $k \in K$
we recursively $\psi$-partition $K$ while alternating between
dimension $V$ and $P$.  This yields a partitioning sequence
$(K_1, D_1), \ldots, (K_n, D_n)$ for key $k$ with
$K_1 \supset K_2 \supset \dots \supset K_n$. We start with $K_1 = K$
and $D_1 = V$.  Next, $K_2 = \psi_k(K_1, V)$ and
$D_2 = \overline{D}_1 = P$.  We continue with the general scheme
$K_{i+1} = \psi_k(K_i, D_i)$ and $D_{i+1} = \overline{D}_i$.  This
goes on until we run out of discriminative bytes in one dimension,
which means $\psi_k(K, D) = K$.  From then on, we can only partition
in dimension $\overline{D}$.  When we run out of discriminative bytes
in this dimension as well, that is
$\psi_k(K, \overline{D}) = \psi_k(K, D) = K$, we stop. The
partitioning sequence is finite due to the monotonicity of the
$\psi$-partitioning (see Lemma~\ref{lemma:monotonicity}), which
guarantees that we make progress in every step in at least one
dimension.  Below we define a partitioning sequence.

\begin{definition}(\emph{Partitioning Sequence}) \label{def:partseq}
  The partitioning sequence
  $\rho(k,K,D) = ((K_1, D_1), \ldots, (K_n, D_n))$ of a composite key
  $k \in K$ denotes the recursive $\psi$-partitioning of the
  partitions to which $k$ belongs.  The pair $(K_i, D_i)$ denotes the
  partitioning of $K_i$ in dimension $D_i$. The final partition $K_n$
  cannot be partitioned further, hence $D_n = \bot$.  $\rho(k,K,D)$ is
  defined as follows:\footnote{Operator $\circ$ denotes concatenation, e.g.,
    $a \circ b = (a,b)$ and $a \,\circ\, (b,c) = (a,b,c)$}

  \vspace{-10pt}
  {\small \begin{align*}
    \rho(k,K,D) =
    \begin{cases}
      (K, D) \circ \rho(k,\psi_k(K,D),\overline{D})
        \hspace{0.40cm}
        \text{if }
        \psi_k(K, D) \subset K \\[1pt]
      \rho(k,K,\overline{D})
        \hspace{0.3cm}
        \text{if }
        \psi_k(K, D) = K \wedge
        \psi_k(K, \overline{D}) \subset K \\[1pt]
      (K, \bot)
        \hspace{0.8cm}
        \text{otherwise}
    \end{cases}
  \end{align*}}
\end{definition}

\begin{example} \label{ex:partseq} In the following we illustrate the
  step-by-step expansion of $\rho(\mathsf{k}_6,\mathsf{K}^{1..7},V)$
  to get $\mathsf{k}_6$'s partitioning sequence.
  \begin{align*}
    \rho(&\mathsf{k}_6,\mathsf{K}^{1..7},V) = \\
      & = (\mathsf{K}^{1..7},V)
          \circ \rho(\mathsf{k}_6, \mathsf{K}^{2,5,6,7},P) \\
      & = (\mathsf{K}^{1..7},V)
          \circ (\mathsf{K}^{2,5,6,7},P)
          \circ \rho(\mathsf{k}_6, \mathsf{K}^{5,6,7},V) \\
      & = (\mathsf{K}^{1..7},V)
          \circ (\mathsf{K}^{2,5,6,7},P)
          \circ (\mathsf{K}^{5,6,7},V)
          \circ \rho(\mathsf{k}_6, \mathsf{K}^{6},P) \\
      & = (\mathsf{K}^{1..7},V)
          \circ (\mathsf{K}^{2,5,6,7},P)
          \circ (\mathsf{K}^{5,6,7},V)
          \circ (\mathsf{K}^{6},\bot)
  \end{align*}
Notice the alternating partitioning in, respectively, $V$ and $P$.  We
only deviate from this if partitioning in one of the dimensions is not
possible.  For instance, $\mathsf{K}^{3,4}$ cannot be partitioned in
dimension $P$ and therefore we get\\[5pt]
  {\hspace*{10pt}$\begin{aligned}[b]
    \rho(\mathsf{k}_4,\mathsf{K}^{1..7},V) & =
      (\mathsf{K}^{1..7},V)
      \circ (\mathsf{K}^{3,4},V)
      \circ (\mathsf{K}^{4},\bot)
  \end{aligned} \hfill\Box$}
\end{example}

There are two key ingredients to our dynamic interleaving: the
monotonicity of discriminative bytes (Lemma~\ref{lemma:monotonicity})
and the alternating $\psi$-partitioning (Definition
\ref{def:partseq}).  The monotonicity guarantees that each time we
$\psi$-partition $K$ we advance the discriminative byte in at
least one dimension. The alternating $\psi$-partitioning ensures that
we interleave paths and values.

\subsection{Interleaving}

We determine the dynamic interleaving $I_{\text{DY}}(k,K)$ of a key $k
\in K$ via $k$'s partitioning sequence $\rho$.  For each element in
$\rho$, we generate a tuple containing two strings $s_P$ and $s_V$ and
the partitioning dimension of the element. The strings $s_P$ and $s_V$
are composed of substrings of $k.P$ and $k.V$, ranging from the
previous discriminative byte up to, but excluding, the current
discriminative byte in the respective dimension. The order of $s_P$
and $s_V$ in a tuple depends on the dimension used in the previous
step: the dimension that has been chosen for the partitioning comes
first.  Formally, this is defined as follows:

\begin{definition}(\emph{Dynamic Interleaving}) \label{def:dynint} %
  Let $k \in K$ be a composite key and let $\rho(k, K, V) = ((K_1,
  D_1),$ $\ldots,$ $(K_n, D_n))$ be the partitioning sequence of $k$.
  The dynamic interleaving $I_{\text{DY}}(k,K) = (t_1, \ldots, t_n)$
  of $k$ is a sequence of tuples $t_i$, where $t_i = (s_P, s_V, D)$ if
  $D_{i-1} = P$ and $t_i = (s_V, s_P, D)$ if $D_{i-1} = V$.  The path
  and value substrings, $s_P$ and $s_V$, and the partitioning
  dimension $D$ are determined as follows:
  \begin{align*}
    t_i.s_P & = k.P[\textsf{dsc}(K_{i-1},P),\textsf{dsc}(K_i,P)-1] \\
    t_i.s_V & = k.V[\textsf{dsc}(K_{i-1},V),\textsf{dsc}(K_i,V)-1] \\
    t_i.D & = D_i
  \end{align*}
  To correctly handle the first tuple we define
  $\textsf{dsc}(K_0,V) = 1$,
  $\textsf{dsc}(K_0,P) = 1$ and $D_0 = V$.  $\hfill\Box$
\end{definition}

\begin{example}
  We compute the tuples for the dynamic interleaving
  $I_{\text{DY}}(\mathsf{k}_6, \mathsf{K}^{1..7}) = (\mathsf{t}_1,
  \ldots, \mathsf{t}_4)$ of key $\mathsf{k}_6$ using the partitioning
  sequence $\rho(\mathsf{k}_6, \mathsf{K}^{1..7}, V) =
  ((\mathsf{K}^{1..7},V),$ $(\mathsf{K}^{2,5,6,7},P),$
  $(\mathsf{K}^{5,6,7},V),$ $(\mathsf{K}^{6},\bot))$ from Example
  \ref{ex:partseq}. The necessary discriminative path and value bytes
  can be found in Table \ref{tab:discbyte}.  Table \ref{tab:exdyn}
  shows the details of each tuple of $\mathsf{k}_6$'s dynamic interleaving with
  respect to $\mathsf{K}^{1..7}$.

\begin{table}[htbp]\centering
  \caption{Computing the dynamic interleaving $I_{\text{DY}}(\mathsf{k}_6,
  \mathsf{K}^{1..7})$.
    }
  \label{tab:exdyn}
  {\scriptsize
  \begin{tabular}{c|llc}
    {\normalsize $t$}
      & \multicolumn{1}{c}{\normalsize $s_V$}
      & \multicolumn{1}{c}{\normalsize $s_P$}
      & \multicolumn{1}{c}{\normalsize $D$} \\
    \hline
    $\mathsf{t}_1$
      & $\mathsf{k}_6.V[1,1] = \vtstr{00}$
      & $\mathsf{k}_6.P[1,12] = \ptstr{/bom/item/ca}$
      & $V$ \\
    $\mathsf{t}_2$
      & $\mathsf{k}_6.V[2,2] = \vtstr{00}$
      & $\mathsf{k}_6.P[13,13] = \ptstr{r}$
      & $P$ \\
    $\mathsf{t}_3$
      & $\mathsf{k}_6.V[3,2] = \vtstr{$\epsilon$}$
      & $\mathsf{k}_6.P[14,15] = \ptstr{/}\ptstr{b}$
      & $V$ \\
    $\mathsf{t}_4$
      & $\mathsf{k}_6.V[3,4] = \vtstr{0C}\,\vtstr{C2}$
      & $\mathsf{k}_6.P[16,20] = \ptstr{rake\$}$
      & $\bot$ \\
  \end{tabular}}
\end{table}

The final dynamic interleavings of all keys from Table \ref{tab:ex2}
are displayed in Table \ref{tab:dynints}.  We highlight in bold the
values of the discriminative bytes at which the paths and values are
interleaved, e.g., for key $\mathsf{k}_6$ these are bytes $\vtbstr{00}$,
$\ptbstr{/}$, and $\vtbstr{0C}$.

\begin{table}[htb]
\caption{The dynamic interleaving of the composite keys
in $\mathsf{K}^{1..7}$. The values
of the discriminative bytes are written in bold.}
\label{tab:dynints}
\centering
{\small
\begin{tabular}{l|l}
  \multicolumn{1}{c|}{\normalsize $k$}
    & \multicolumn{1}{c}{\normalsize Dynamic Interleaving
      $I_{\text{DY}}(k,\mathsf{K}^{1..7})$}
  \\ \hline
  $\mathsf{k}_2$
    &($(\vtstr{00}, \ptstr{/b\textcolor{black}{$\ldots$}a}, V)$,\,%
      $(\vtbstr{00}, \ptstr{r}, P)$,\,%
      $(\ptbstr{a}\ptstr{biner\$}, \vtstr{00\,F1}, \bot)$)
  \\
  $\mathsf{k}_7$
    &($(\vtstr{00}, \ptstr{/b\textcolor{black}{$\ldots$}a},V)$,\,%
      $(\vtbstr{00}, \ptstr{r},P)$,\,%
      $(\ptbstr{/}\ptstr{b}, \vtstr{$\epsilon$},V)$,\,%
      $(\vtbstr{0A}\, \vtstr{8C}, \ptstr{umper\$}, \bot)$)
  \\
  $\mathsf{k}_5$
    &($(\vtstr{00}, \ptstr{/b\textcolor{black}{$\ldots$}a}, V)$,\,%
      $(\vtbstr{00}, \ptstr{r}, P)$,\,%
      $(\ptbstr{/}\ptstr{b}, \vtstr{$\epsilon$}, V)$,\,%
      $(\vtbstr{0B}\,\vtstr{4A}, \ptstr{elt\$}, \bot)$)
  \\
  $\mathsf{k}_6$
    &($(\vtstr{00}, \ptstr{/b\textcolor{black}{$\ldots$}a}, V)$,\,%
      $(\vtbstr{00}, \ptstr{r}, P)$,\,%
      $(\ptbstr{/}\ptstr{b}, \vtstr{$\epsilon$},V)$,\,%
      $(\vtbstr{0C}\,\vtstr{C2}, \ptstr{rake\$}, \bot)$)
  \\
  $\mathsf{k}_1$
    &($(\vtstr{00}, \ptstr{/b\textcolor{black}{$\ldots$}a}, V)$,\,%
      $(\vtbstr{01}\,\vtstr{0E\,50}, \ptstr{noe\$}, \bot)$)
  \\
  $\mathsf{k}_3$
    &($(\vtstr{00}, \ptstr{/b\textcolor{black}{$\ldots$}a}, V)$,\,%
      $(\vtbstr{03}\,\vtstr{D3}, \ptstr{r/battery\$}, V)$,
      $(\vtbstr{5A}, \ptstr{$\epsilon$}, \bot)$)
  \\
  $\mathsf{k}_4$
    &($(\vtstr{00}, \ptstr{/b\textcolor{black}{$\ldots$}a}, V)$,\,%
      $(\vtbstr{03}\,\vtstr{D3}, \ptstr{r/battery\$}, V)$,
      $(\vtbstr{B0}, \ptstr{$\epsilon$}, \bot)$)
  \\
\end{tabular}}
\end{table}

\end{example}

\subsection{Efficiency of Interleavings}
\label{sec:costmodel}

We introduce a cost model to measure the efficiency of different
interleaving schemes.  We assume that the interleaved keys are
arranged in a tree-like search structure. Each node
represents the partitioning of the composite keys by either path or
value, and the node branches for each different value of a
discriminative path or value byte.  \rev{We simplify the cost model by
  assuming that the search structure is a complete tree with fanout
  $o$ where every root-to-leaf path contains $h$ edges ($h$ is called
  the height).} Further, we assume that all nodes on one level
represent a partitioning in the same dimension $\phi_i$ and we use a
vector $\phi(\phi_1, \ldots, \phi_h)$ to specify the partitioning
dimension on each level.  Figure~\ref{fig:searchscheme} visualizes
this scheme.

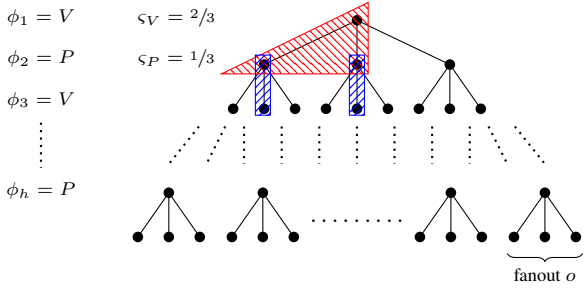
\begin{figure}[htb]
\centering
\begin{tikzpicture}
  \node at (0,0) {
  \begin{forest}
    casindex,
    for tree={
      parent anchor=north,
      l=0mm,
      s sep+=5pt,
    },
    [{},name=root,
      [{},bbnode,
        [{},bbnode,
        ]
        [{},bbnode,
        ]
        [{},bbnode,
        ]
      ]
      [{},bbnode,
        [{},bbnode,
        ]
        [{},bbnode,
        ]
        [{},bbnode,
        ]
      ]
      [{},bbnode,
        [{},bbnode,
        ]
        [{},bbnode,
        ]
        [{},bbnode,
        ]
      ]
    ]
    \fill[black] (root.parent anchor) circle[radius=2pt];
  \end{forest}
  };
  \foreach \x in {-2.5,-1.25,1.25,2.5} {
    \node (leaf-\x) at (\x,-2) {
    \begin{forest}
      searchindex,
      [{},name=root,
        []
        []
        []
      ]
      \fill[black] (root.parent anchor) circle[radius=2pt];
    \end{forest}
    };
  }
  %
  %
  \node at (0,-2) {$\ldots\ldots\ldots$};
  \draw[thick,dotted] (-2.1,-0.75)  -- (-2.5,-1.25);
  \draw[thick,dotted] (-1.75,-0.75) -- (-2,-1.25);
  \draw[thick,dotted] (+1.75,-0.75) -- (+2,-1.25);
  \draw[thick,dotted] (+2.1,-0.75)  -- (+2.5,-1.25);
  \foreach \x in {-1.5,-1.0,-0.5,0,0.5,1,1.5} {
    \draw[thick,dotted] (\x,-0.75) -- (\x,-1.25);
  }
  \draw[decorate,decoration={brace,amplitude=3pt,raise=4pt,mirror},yshift=0pt]
    (2.0,-2.3) -- (3.0,-2.3)
    node[midway,yshift=-12pt,font={\scriptsize}]
    {fanout $o$};
  \node[font=\scriptsize] at (-4.2,0.7)  {$\phi_1 = V$};
  \node[font=\scriptsize] at (-4.2,0.15) {$\phi_2 = P$};
  \node[font=\scriptsize] at (-4.2,-0.4) {$\phi_3 = V$};
  \node[font=\scriptsize] at (-4.2,-1.6) {$\phi_h = P$};
  \draw[thick,dotted] (-4.2,-0.7)  -- (-4.2,-1.3);
  %
  %
  \draw[draw=red,pattern=north west lines, pattern color=red]
    (0.15,0.9) -- (-1.8,-0.05) -- (0.15,-0.05) -- cycle;
  \node[font=\scriptsize] at (-2.4,0.7) {$\varsigma_V = \sfrac{2}{3}$};
  %
  \fill[draw=blue,pattern=north east lines, pattern color=blue]
    (-1.35,+0.2) -- (-1.35,-0.6) -- (-1.15,-0.6) -- (-1.15,0.2) -- cycle;
  \fill[draw=blue,pattern=north east lines, pattern color=blue]
    (0.1,+0.2) -- (0.1,-0.6) -- (-0.1,-0.6) -- (-0.1,0.2) -- cycle;
  \node[font=\scriptsize] at (-2.4,0.15) {$\varsigma_P = \sfrac{1}{3}$};
\end{tikzpicture}
\caption{The search structure in our cost model is a complete tree of
height $h$ and fanout $o$.}
\label{fig:searchscheme}
\end{figure}

A search starts at the root and traverses the data structure to
determine the answer set. In the case of range queries, more than one
branch must be followed.  A search follows a fraction of the outgoing
branches $o$ originating at this node. We call this the selectivity of
a node (or just selectivity).  We assume that every path node has a
selectivity of $\rev{\varsigma_P}$ and every value node has the
selectivity of $\rev{\varsigma_V}$.  The cost $\rev{\widehat{C}}$ of a
search, measured in the number of visited nodes during the search, is
as follows: \rev{\begin{equation*}
    \widehat{C}(o,h,\phi,\varsigma_P,\varsigma_V) = 1 + \sum_{l=1}^{h}
    \prod_{i=1}^{l} (o \cdot \varsigma_{\phi_i})
\end{equation*}}

If a workload is known upfront, a system can optimize indexes to
support specific queries.  Our goal is an access method that can deal
with a wide range of queries in a dynamic environment in a robust way,
i.e., avoiding a bad performance for any particular query type.  This
is motivated by the fact that modern data analytics utilizes a large
number of ad-hoc queries to do exploratory analysis. For example, in
the context of building a robust partitioning scheme for ad-hoc query
workloads, Shanbhag et al.~\cite{Shanbhag17} found that after
analyzing the first 80\% of real-world workload traces the remaining
20\% still contained 57\% completely new queries.

Even though robustness of query processing performance has received
considerable interest, there is a lack of unified metrics in this
area~\cite{Graefe11,Graefe12}.  Our goal is a good performance for
queries with differing selectivities for path and value predicates.
Towards this goal we define the notion of \emph{complementary
  queries}.

\begin{definition}(\emph{Complementary Query}) %
  Given a query $Q$ with path selectivity $\rev{\varsigma_P}$ and value
  selectivity $\rev{\varsigma_V}$, there is a \emph{complementary query} $Q'$
  with path selectivity $\rev{\varsigma'_P} = \rev{\varsigma_V}$ and value selectivity
  $\rev{\varsigma'_V} = \rev{\varsigma_P}$
\end{definition}

State-of-the-art CAS-indexes favor either path or value predicates.
As a result they show a very good performance for one type of query
but run into problems for the complementary query.

\begin{definition}(\emph{Robustness}) \label{def:robustness}
  A CAS-index is \emph{robust} if it optimizes the average performance
  when evaluating a query $Q$ and its complementary query $Q'$.
\end{definition}

\begin{example} \label{ex:cost}
  Figure~\ref{fig:interleavecost}a shows the costs for a query $Q$ and
  its complementary query $Q'$ for different interleavings in terms of
  the number of visited nodes during the search.  We assume parameters
  $o=10$ and $h=12$ for the search structure.  In our dynamic
  interleaving $I_{\text{DY}}$ the discriminative bytes are perfectly
  alternating.  $I_{\text{PV}}$ stands for path-value concatenation
  with $\phi_i = P$ for $1 \leq i \leq 6$ and $\phi_i = V$ for $7 \leq
  i \leq 12$.  $I_{\text{VP}}$ is a value-path concatenation (with an
  inverse $\phi$ compared to $I_{\text{PV}}$).  We also consider two
  additional permutations: $I_1$ uses a vector $\phi =
  (V,V,V,V,P,V,P,V,P,P,P,P)$ and $I_2$ one equal to
  $(V,V,V,P,P,V,P,V,V,P,P,P)$.  They resemble, e.g., the byte-wise
  interleaving that usually exhibits irregular alternation patterns
  with a clustering of, respectively, discriminative path and value
  bytes.  Figure \ref{fig:interleavecost}b shows the average costs and
  the standard deviation.  The numbers demonstrate the robustness of
  our dynamic interleaving: it clearly shows the best performance both
  in terms of average costs and lowest standard deviation.
\end{example}

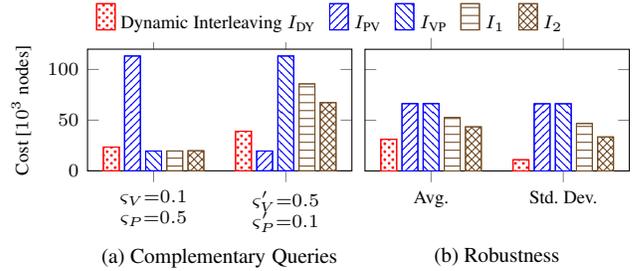
\begin{figure}[htb]
\begin{tikzpicture}
  \begin{groupplot}[
    height=32mm,
    width=145pt,
    ticklabel style={font=\scriptsize},
    xlabel near ticks,
    ylabel near ticks,
    ylabel/.append style={font=\scriptsize},
    scaled y ticks=base 10:-3,
    ytick scale label code/.code={},
    symbolic x coords={k1,k2},
    xtick=data,
    xticklabel style={
      align=center,
    },
    group style={
      group size=2 by 1,
      horizontal sep=5pt,
      yticklabels at=edge left,
      ylabels at=edge left,
    },
    ybar,
    enlarge x limits=0.5,
    bar cycle list,
  ]
  \nextgroupplot[
    bar width=6pt,
    ylabel={Cost [$10^3$ nodes]},
    xticklabels={
      {$\rev{\varsigma_V}{=}0.1$\\$\rev{\varsigma_P}{=}0.5$},
      {$\rev{\varsigma'_V}{=}0.5$\\$\rev{\varsigma'_P}{=}0.1$}
    },
    legend style={
      at={(0.0,1.05)},
      anchor=south west,
      legend columns=-1,
      draw=none,
      /tikz/every even column/.append style={column sep=3pt},
    },
    xlabel={\small (a) Complementary Queries},
    ymax=120000,
    ymin=0,
  ]
  \addplot[idy]  coordinates {(k1, 23436) (k2, 39060)};
  \addplot[ipv]  coordinates {(k1,113280) (k2, 19536)};
  \addplot[ivp]  coordinates {(k1, 19536) (k2,113280)};
  \addplot[ione] coordinates {(k1, 19564) (k2, 85780)};
  \addplot[itwo] coordinates {(k1, 19808) (k2, 67280)};
  \legend{Dynamic Interleaving $I_\text{DY}$, $I_{\text{PV}}$,
    $I_{\text{VP}}$, $I_1$, $I_2$}
  \nextgroupplot[
    bar width=6pt,
    xticklabels={Avg., Std.~Dev.},
    xlabel={\small (b) Robustness},
    x label style={yshift=-9pt},
    ymax=120000,
    ymin=0,
  ]
  \addplot[idy]  coordinates {(k1, 31248) (k2, 11047)};
  \addplot[ipv]  coordinates {(k1, 66408) (k2, 66287)};
  \addplot[ivp]  coordinates {(k1, 66408) (k2, 66287)};
  \addplot[ione] coordinates {(k1, 52672) (k2, 46821)};
  \addplot[itwo] coordinates {(k1, 43544) (k2, 33567)};
  \end{groupplot}
\end{tikzpicture}
\caption{Dynamic interleaving has a robust query performance.}
\label{fig:interleavecost}
\end{figure}

In the previous example we showed empirically that a perfectly
alternating interleaving exhibits the best overall performance when
evaluating complementary queries. In addition to this, we can prove
that this is always the case.

\begin{theorem}\label{theo:avg}
  Consider a query $Q$ with selectivities $\rev{\varsigma_P}$ and
  $\rev{\varsigma_V}$
  and its complementary query $Q'$ with selectivities
  $\rev{\varsigma'_P} = \rev{\varsigma_V}$ and $\rev{\varsigma'_V} =
  \rev{\varsigma_P}$.  There is no
  interleaving that on average performs better than the dynamic
  interleaving that has a perfectly alternating vector
  $\phi_{\text{DY}}$, i.e.,
  $\forall \phi:
  \rev{\widehat{C}}(o,h,\phi_{\text{DY}},\rev{\varsigma_P},\rev{\varsigma_V}) +
  \rev{\widehat{C}}(o,h,\phi_{\text{DY}},\rev{\varsigma'_P},\rev{\varsigma'_V}) \leq
  \rev{\widehat{C}}(o,h,\phi,\rev{\varsigma_P},\rev{\varsigma_V}) +
  \rev{\widehat{C}}(o,h,\phi,\rev{\varsigma'_P},\rev{\varsigma'_V})$.%
\end{theorem}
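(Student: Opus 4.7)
The plan is to reduce the theorem to a per-level optimization and then invoke convexity. First I would expand the cost formula in a way that makes the combinatorial dependence on $\phi$ explicit. Let $a_l(\phi) = |\{\,i \le l : \phi_i = P\,\}|$ denote the number of $P$'s among the first $l$ entries of $\phi$. Then every product telescopes as
\[
\prod_{i=1}^{l} (o\cdot\varsigma_{\phi_i}) \;=\; o^{l}\,\varsigma_P^{\,a_l(\phi)}\,\varsigma_V^{\,l-a_l(\phi)},
\]
so the cost of $Q$ and of its complement $Q'$ (obtained by swapping $\varsigma_P$ and $\varsigma_V$) sum to
\[
2 + \sum_{l=1}^{h} o^{l}\Bigl[\,\varsigma_P^{\,a_l}\varsigma_V^{\,l-a_l} + \varsigma_V^{\,a_l}\varsigma_P^{\,l-a_l}\Bigr].
\]
Because each level contributes a non-negative term that depends on $\phi$ only through the single integer $a_l$, the optimization over $\phi$ decouples across levels, modulo the global constraint that the $a_l$'s must come from an actual sequence (i.e., $a_l - a_{l-1} \in \{0,1\}$).

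Next I would analyze the per-level function
$
f_l(a) = \varsigma_P^{\,a}\varsigma_V^{\,l-a} + \varsigma_V^{\,a}\varsigma_P^{\,l-a}
$
for $a \in \{0,1,\ldots,l\}$. Setting $r = \varsigma_P/\varsigma_V$ gives $f_l(a) = \varsigma_V^{\,l}\bigl(r^{a} + r^{\,l-a}\bigr)$, which is symmetric under $a \mapsto l-a$ and, being a sum of two exponentials in $a$, is strictly convex. Consequently its minimum over the integers lies at whichever integer(s) are closest to the axis of symmetry $l/2$: namely $a^\ast = l/2$ when $l$ is even, and $a^\ast \in \{\lfloor l/2\rfloor,\lceil l/2\rceil\}$ (two equal values by symmetry) when $l$ is odd. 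An AM--GM inequality provides a tidy alternative confirmation, since $f_l(a) \ge 2(\varsigma_P\varsigma_V)^{l/2}$ with equality forced exactly by $a = l-a$. In short, each level is minimized precisely when $a_l$ differs from $l/2$ by at most $1/2$.

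Finally I would observe that the perfectly alternating $\phi_{\text{DY}}$ realizes this optimum at \emph{every} level simultaneously: starting with $V$ gives $a_l(\phi_{\text{DY}}) = \lfloor l/2 \rfloor$ for all $l$, and starting with $P$ gives $a_l = \lceil l/2\rceil$. Either way, the level-wise lower bound is attained at every $l$, and therefore so is the sum. Since any other $\phi$ can only match or exceed $f_l$ at every level and must strictly exceed it at some level (unless $\varsigma_P = \varsigma_V$, in which case all $\phi$'s tie), we conclude $\widehat{C}(\phi_{\text{DY}},\varsigma_P,\varsigma_V) + \widehat{C}(\phi_{\text{DY}},\varsigma_V,\varsigma_P) \le \widehat{C}(\phi,\varsigma_P,\varsigma_V) + \widehat{C}(\phi,\varsigma_V,\varsigma_P)$ for every $\phi$.

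The only delicate point, and the step I expect to be the main obstacle, is the simultaneous attainment: a priori one might fear that optimizing $a_l$ at one level constrains $a_{l+1}$ badly. Fortunately the perfectly alternating vector shows that the sequence of targets $\lfloor l/2\rfloor$ (or $\lceil l/2\rceil$) is itself monotone with increments in $\{0,1\}$, so it is realizable as a $\phi$, and the decoupled lower bound is in fact achievable by a single coherent $\phi$.
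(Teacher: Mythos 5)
Your proposal is correct and follows essentially the same route as the paper: both reduce the claim to a level-by-level comparison of $\prod_{i\le l}(o\,\varsigma_{\phi_i})$ for $Q$ and $Q'$, observe that this depends on $\phi$ only through the number of $P$-levels among the first $l$, and show the balanced split minimizes the symmetric sum. The only difference is how the per-level inequality is established: you use symmetry plus convexity of $a \mapsto r^{a}+r^{l-a}$ (with $r=\varsigma_P/\varsigma_V$), which handles both parities of $l$ uniformly, whereas the paper splits into the even case (reducing to $2ab\le a^2+b^2$) and the odd case (reducing to the factorization $(a^{x}-b^{x})(a^{x+1}-b^{x+1})\ge 0$); these are precisely the two integer instances of your convexity statement. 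Your explicit remark that the unconstrained per-level optima $a_l=\lfloor l/2\rfloor$ form a realizable sequence (increments in $\{0,1\}$) is a point the paper leaves implicit, and is worth keeping.
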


\begin{proof}
  \rev{We begin with a brief outline of the proof. We show for a level
  $l$ that the costs of query $Q$ and complementary query $Q'$ on
  level $l$ is smallest with the dynamic interleaving.  That is, for a
  level $l$ we show that $\prod_{i=1}^{l} (o \cdot \varsigma_{\phi_i}) +
  \prod_{i=1}^{l} (o \cdot \varsigma'_{\phi_i})$ is smallest with the
  vector $\phi_{\text{DY}} = (V,P,V,P,\ldots)$ of our dynamic
  interleaving.  Since this holds for any level $l$, it also holds for
  the sum of costs over all levels $l$, $1 \leq l \leq h$, and this
  proves the theorem.  }

  We only look at search trees with a height $h \geq 2$, as for $h=1$
  we do not actually have an interleaving (and the costs are all the
  same). W.l.o.g., we assume that the first level of the search tree
  always starts with a discriminative value byte, i.e., $\phi_1 = V$.
  Let us look at the cost for one specific level $l$ for query $Q$ and
  its complementary query $Q'$. We distinguish two cases: $l$ is even
  or $l$ is odd.

  \textbf{$l$ is even:} The cost for a perfectly
  alternating interleaving for $Q$ for level $l$ is equal to $o^l
  (\varsigma_V \cdot \varsigma_P \dots \varsigma_V \cdot
  \varsigma_P)$, while the cost for $Q'$ is equal to $o^l
  (\varsigma_V' \cdot \varsigma_P' \dots \varsigma_V' \cdot
  \varsigma_P')$, which is equal to $o^l (\varsigma_P \cdot
  \varsigma_V \dots \varsigma_P \cdot \varsigma_V)$. This is the same
  cost as for $Q$, so adding the two costs gives us $2 o^l
  \varsigma_V^{\sfrac{l}{2}} \varsigma_P^{\sfrac{l}{2}}$

  \rev{For a non-perfectly alternating interleaving with the same
  number of $\varsigma_V$ and $\varsigma_P$ multiplicands up to level
  $l$ we have the same cost as for our dynamic interleaving, i.e., $2
  o^l \varsigma_V^{\sfrac{l}{2}} \varsigma_P^{\sfrac{l}{2}}$. Now let
  us assume that the number of $\varsigma_V$ and $\varsigma_P$
  multiplicands is different for level $l$ (there must be at least one
  such level $l$).} Assume that for $Q$ we have $r$ multiplicands of
  type $\varsigma_V$ and $s$ multiplicands of type $\varsigma_P$, with
  $r+s = l$ and, w.l.o.g., $r>s$. This gives us $o^l \varsigma_V^s
  \varsigma_P^s \varsigma_V^{r-s} + o^l \varsigma_V^s \varsigma_P^s
  \varsigma_P^{r-s} = o^l \varsigma_V^s \varsigma_P^s
  (\varsigma_V^{r-s} + \varsigma_P^{r-s})$ for the cost.

  We have to show that $2 o^l \varsigma_V^{\sfrac{l}{2}}
  \varsigma_P^{\sfrac{l}{2}} \leq o^l \varsigma_V^s \varsigma_P^s
  (\varsigma_V^{r-s} + \varsigma_P^{r-s})$. As all values are greater
  than zero, this is equivalent to $2 \varsigma_V^{\sfrac{l}{2}-s}
  \varsigma_P^{\sfrac{l}{2}-s} \leq \varsigma_V^{r-s} +
  \varsigma_P^{r-s}$. The right-hand side can be reformulated:
  $\varsigma_V^{r-s} + \varsigma_P^{r-s} = \varsigma_V^{l-2s} +
  \varsigma_P^{l-2s} = \varsigma_V^{\sfrac{l}{2}-s}
  \varsigma_V^{\sfrac{l}{2}-s} + \varsigma_P^{\sfrac{l}{2}-s}
  \varsigma_P^{\sfrac{l}{2}-s}$.  Setting $a =
  \varsigma_V^{\sfrac{l}{2}-s}$ and $b =
  \varsigma_P^{\sfrac{l}{2}-s}$, this boils down to showing $2ab \leq
  a^2 + b^2 \Leftrightarrow 0 \leq (a - b)^2$, which is always true.

  \textbf{$l$ is odd:} W.l.o.g. we assume that for
  computing the cost for a perfectly alternating interleaving for $Q$,
  there are $\lceil \sfrac{l}{2} \rceil$ multiplicands of type
  $\varsigma_V$ and $\lfloor \sfrac{l}{2} \rfloor$ multiplicands of
  type $\varsigma_P$. This results in $o^l \varsigma_V^{\lfloor
  \sfrac{l}{2} \rfloor} \varsigma_P^{\lfloor \sfrac{l}{2} \rfloor}
  (\varsigma_V + \varsigma_P)$ for the sum of costs for $Q$ and $Q'$.

  For a non-perfectly alternating interleaving, we again have $o^l
  \varsigma_V^s \varsigma_P^s (\varsigma_V^{r-s} + \varsigma_P^{r-s})$
  with $r+s = l$ and $r>s$, which can be reformulated to $o^l
  \varsigma_V^s \varsigma_P^s (\varsigma_V^{\lfloor \sfrac{l}{2}
  \rfloor - s} \varsigma_V^{\lfloor \sfrac{l}{2} \rfloor - s}
  \varsigma_V + \varsigma_P^{\lfloor \sfrac{l}{2} \rfloor - s}
  \varsigma_P^{\lfloor \sfrac{l}{2} \rfloor - s} \varsigma_P)$.

  What is left to prove is $o^l \varsigma_V^{\lfloor \sfrac{l}{2}
  \rfloor} \varsigma_P^{\lfloor \sfrac{l}{2} \rfloor} (\varsigma_V +
  \varsigma_P) \leq o^l \varsigma_V^s \varsigma_P^s
  (\varsigma_V^{\lfloor \sfrac{l}{2} \rfloor - s} \varsigma_V^{\lfloor
  \sfrac{l}{2} \rfloor - s} \varsigma_V + \varsigma_P^{\lfloor
  \sfrac{l}{2} \rfloor - s} \varsigma_P^{\lfloor \sfrac{l}{2} \rfloor
  - s} \varsigma_P)$, which is equivalent to $\varsigma_V^{\lfloor
  \sfrac{l}{2} \rfloor - s} \varsigma_P^{\lfloor \sfrac{l}{2} \rfloor
  - s} (\varsigma_V + \varsigma_P) \leq \varsigma_V^{\lfloor
  \sfrac{l}{2} \rfloor - s} \varsigma_V^{\lfloor \sfrac{l}{2} \rfloor
  - s} \varsigma_V + \varsigma_P^{\lfloor \sfrac{l}{2} \rfloor - s}
  \varsigma_P^{\lfloor \sfrac{l}{2} \rfloor - s} \varsigma_P$.
  Substituting $a = \varsigma_V$, $b = \varsigma_P$, and $x = {\lfloor
  \sfrac{l}{2} \rfloor - s}$, this means showing that $a^x b^x (a+b)
  \leq a^{2x + 1} + b^{2x + 1} \Leftrightarrow 0 \leq a^{2x + 1} +
  b^{2x + 1} - a^x b^x (a+b)$. Factorizing this polynomial gives us
  $(a^x - b^x)(a^{x+1} - b^{x+1})$ or $(b^x - a^x)(b^{x+1} -
  a^{x+1})$. We look at $(a^x - b^x)(a^{x+1} - b^{x+1})$, the argument
  for the other factorization follows along the same lines.  This term
  can only become negative if one factor is negative and the other is
  positive. Let us first look at the case $a < b$: since $0 \leq a,b
  \leq 1$, we can immediately follow that $a^x < b^x$ and $a^{x+1} <
  b^{x+1}$, i.e., both factors are negative. Analogously, from $a > b$
  (and $0 \leq a,b \leq 1$) immediately follows $a^x > b^x$ and
  $a^{x+1} > b^{x+1}$, i.e., both factors are positive.
\end{proof}


\rev{
  Note that in practice the search structure is not a complete tree
  and the fraction $\varsigma_P$ and $\varsigma_V$ of children that
  are traversed at each node is not constant.  In
  Section~\ref{sec:expcostmodel} we experimentally evaluate the cost
  model on real-world datasets.  We show that the estimated cost and
  the true cost of a query are off by a factor of two, on average.
  This is a good estimate for the cost of a query.  }



\section{RCAS Index}
\label{sec:rcas}

We propose the Robust Content-And-Structure (RCAS) index to
efficiently query the content and structure of hierarchical data.
The RCAS index uses our dynamic interleaving to integrate the paths
and values of composite keys in a trie-based index.

\subsection{Trie-Based Structure of RCAS}

The RCAS index is a trie data-structure that efficiently supports CAS
queries with range and prefix searches.  Each node $n$ in the RCAS
index includes a dimension $n.D$, path substring $n.s_P$, and value
substring $n.s_V$ that correspond to the fields $t.D$, $t.s_P$ and
$t.s_V$ in the dynamic interleaving of a key (see Definition
\ref{def:dynint}). The substrings $n.s_P$ and $n.s_V$ are
variable-length strings. Dimension $n.D$ is $P$ or $V$ for inner nodes
and $\bot$ for leaf nodes. Leaf nodes additionally store a set of
references $r_i$ to nodes in the database, denoted $n.\text{refs}$.
Each dynamically interleaved key corresponds to a root-to-leaf path in
the RCAS index.

\begin{definition}(\emph{RCAS Index}) \label{def:rcas} %
  Let $K$ be a set of composite keys and let $R$ be a tree.
  Tree $R$ is the RCAS index for $K$ iff the following
  conditions are satisfied.
  \begin{enumerate}
  \item $I_\text{DY}(k,K) = (t_1,\ldots,t_m)$ is the dynamic
    interleaving of a key $k \in K$ iff there is a
    root-to-leaf path $(n_1, \ldots, n_m)$ in $R$ such that
    $t_i.s_P = n_i.s_P$, $t_i.s_V = n_i.s_V$, and $t_i.D = n_i.D$ for
    $1 \leq i \leq m$.
  \item $R$ does not include duplicate siblings, i.e., no two sibling
    nodes $n$ and $n'$, $n \neq n'$, in $R$ have the same values for
    $s_P$, $s_V$, and $D$, respectively.
  \end{enumerate}
\end{definition}

\begin{example}
  Figure \ref{fig:solution} shows the RCAS index for the composite
  keys $\mathsf{K}^{1..7}$.  We use blue and red colors for bytes
  from the path and value, respectively.  The discriminative bytes are
  highlighted in bold.  The dynamic interleaving
  $I_{\text{DY}}(\mathsf{k}_6,\mathsf{K}^{1..7}) = (\mathsf{t}_1,
  \mathsf{t}_2, \mathsf{t}_3, \mathsf{t}_4)$ from Table
  \ref{tab:dynints} is mapped to the path $(\mathsf{n}_1,
  \mathsf{n}_2, \mathsf{n}_4, \mathsf{n}_7)$ in the RCAS index. For
  key $\mathsf{k}_2$, the first two tuples of
  $I_{\text{DY}}(\mathsf{k}_2,\mathsf{K}^{1..7})$ are also mapped to
  $\mathsf{n}_1$ and $\mathsf{n}_2$, while the third tuple is mapped
  to $\mathsf{n}_3$.
\end{example}

\begin{figure}[htbp]\centering
\scalebox{.75}{
\begin{forest}
  casindex,
  for tree={
    s sep-=1mm,
  },
  [{$\mathsf{n}_{1}$}\\{(\vtstr{00},\pstr{/bom/item/ca},$V$)},name=root
    [{$\mathsf{n}_{2}$}\\{(\vtbstr{00},\pstr{r},$P$)},pvnode,
      [{{$\mathsf{n}_{3}$}\\(\ptbstr{a}\pstr{biner\$},\vtstr{00\,F1},$\bot$)\\$\{\mathsf{r}_{2}\}$},lpnode,
      ]
      [{$\mathsf{n}_{4}$}\\{(\ptbstr{/}\pstr{b},\vtstr{$\epsilon$},$V$)},vpnode,
        [{{$\mathsf{n}_{5}$}\\(\vtstr{{\bf \ttfamily 0A}\,8C},\pstr{umper\$},$\bot$)\\$\{\mathsf{r}_{7}\}$},lvnode,
        ]
        [{{$\mathsf{n}_{6}$}\\(\vtstr{{\bf \ttfamily 0B}\,4A},\pstr{elt\$},$\bot$)\\$\{\mathsf{r}_{5}\}$},lvnode,
        ]
        [{{$\mathsf{n}_{7}$}\\(\vtstr{{\bf \ttfamily 0C}\,C2},\pstr{rake\$},$\bot$)\\$\{\mathsf{r}_{6}\}$},lvnode,
        ]
      ]
    ]
    [{{$\mathsf{n}_{8}$}\\(\vtstr{{\bf \ttfamily 01}\,0E\,50},\pstr{noe\$},$\bot$)\\$\{\mathsf{r}_{1}\}$},lvnode,
    ]
    [{$\mathsf{n}_{9}$}\\{(\vtstr{{\bf \ttfamily 03}\,D3},\pstr{r/battery\$},$V$)},vvnode,
      [{{$\mathsf{n}_{10}$}\\(\vtbstr{5A},\pstr{$\epsilon$},$\bot$)\\$\{\mathsf{r}_{3},\mathsf{r}'_3\}$},lvnode,
      ]
      [{{$\mathsf{n}_{11}$}\\(\vtbstr{B0},\pstr{$\epsilon$},$\bot$)\\$\{\mathsf{r}_{4}\}$},lvnode,
      ]
    ]
  ]
  \fill[black] (root.parent anchor) circle[radius=2pt];
  \node at ([yshift=5pt]root.north) {\ttfamily };
\end{forest}
}
\caption{The RCAS index for the composite keys $\mathsf{K}^{1..7}$.}
\label{fig:solution}
\end{figure}
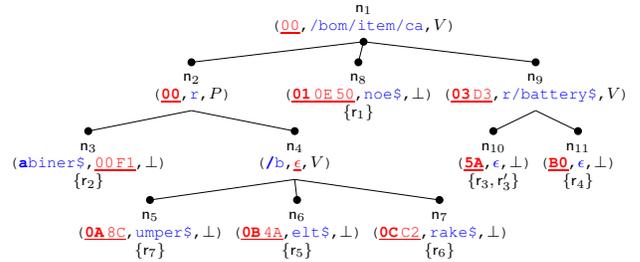

\subsection{Physical Node Layout}
\label{sec:nodelayout}

Figure \ref{fig:node} shows the physical structure of an inner node.
The \pcode{header} field contains meta information, such as the number
of children.  Fields $s_P$ and $s_V$ (explained above) are implemented
as variable-length byte vectors (C++'s \pcode{std::vector<uint8_t>}).
Dimension $D$ ($P$ or $V$, or $\bot$ if the node is a leaf) is the
dimension in which the node partitions the data.  The remaining space
of an inner node (gray-colored in Figure \ref{fig:node}) is reserved
for child pointers.  Since $\psi$ partitions at the granularity of
bytes, each node can have at most 256 children, one for each possible
value of a discriminative byte from \pcode{0x00} to \pcode{0xFF} (or their
corresponding ASCII characters in Figure \ref{fig:solution}).  For
each possible value $b$ there is a pointer to the subtree whose keys
all have value $b$ for the discriminative byte of dimension $D$.
Typically, many of the 256 pointers are \pcode{NULL}.  Therefore, we
implement our trie as an Adaptive Radix Tree (ART)~\cite{VL13}, which
defines four node types with a physical fanout of 4, 16, 48, and 256
child pointers, respectively.  Nodes are resized to adapt to the
actual fanout of a node.  Figure \ref{fig:node} illustrates the node
type with an array of 256 child pointers.  For the remaining node
types we refer to Leis et al.\ \cite{VL13}.

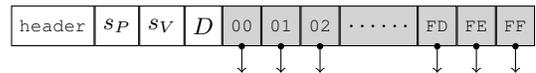
\begin{figure}[htb]
\centering
\begin{tikzpicture}[
  box/.style={
    draw,
    minimum height=15pt,
    anchor=west,
  },
  ptr/.style={
    fill=colxe,
  },
]
  \node[box] (n0) {\pcodes{header}};
  \node[box] (n1) at (n0.east) {$s_P$};
  \node[box] (n2) at (n1.east) {$s_V$};
  \node[box] (n3) at (n2.east) {$D$};
  \node[box,ptr] (n00) at (n3.east) {\pcodes{00}};
  \node[box,ptr] (n01) at (n00.east) {\pcodes{01}};
  \node[box,ptr] (n02) at (n01.east) {\pcodes{02}};
  \node[box,ptr] (nXX) at (n02.east) {$\ldots\ldots$};
  \node[box,ptr] (nFD) at (nXX.east) {\pcodes{FD}};
  \node[box,ptr] (nFE) at (nFD.east) {\pcodes{FE}};
  \node[box,ptr] (nFF) at (nFE.east) {\pcodes{FF}};
  \draw[fill=black] (n00.south) circle (1pt) edge[->] ([yshift=-10pt] n00.south) ;
  \draw[fill=black] (n01.south) circle (1pt) edge[->] ([yshift=-10pt] n01.south) ;
  \draw[fill=black] (n02.south) circle (1pt) edge[->] ([yshift=-10pt] n02.south) ;
  \draw[fill=black] (nFD.south) circle (1pt) edge[->] ([yshift=-10pt] nFD.south) ;
  \draw[fill=black] (nFE.south) circle (1pt) edge[->] ([yshift=-10pt] nFE.south) ;
  \draw[fill=black] (nFF.south) circle (1pt) edge[->] ([yshift=-10pt] nFF.south) ;
\end{tikzpicture}
\caption{Structure of an inner node with 256 pointers.}
\label{fig:node}
\end{figure}

The structure of a leaf node is similar to that shown in Figure
\ref{fig:node}, except that instead of child pointers the leaf nodes
have a variable-length vector with references to nodes in the
database.


\subsection{\rev{Bulk-Loading} RCAS}

\rev{This section gives an efficient bulk-loading algorithm for RCAS
that is linear in the number of composite keys $|K|$. It
simultaneously computes the dynamic interleaving of all keys in $K$.}
We implement a partition $K$ as a linked list of pairs $(k,r)$, where
$r$ is a reference to a database node with path $k.P$ and value $k.V$.
In our implementation the keys in $K$ need not be unique. There can be
pairs $(k,r_i)$ and $(k,r_j)$ that have the same key but have
different references $r_i \neq r_j$. This is the case if there are
different nodes in the indexed database that have the same path and
value (thus the same key).  A partitioning $M = \psi(K, D)$ is
implemented as an array of length $2^8$ with references to (possibly
empty) partitions $K$.  The array indexes \texttt{0x00} to
\texttt{0xFF} are the values of the discriminative byte.

\begin{example}
  Figure \ref{fig:structures}a shows the linked list for set
  $\mathsf{K}^{1..7}$ from our running example. Two nodes, pointed to
  by $\mathsf{r}_3$ and $\mathsf{r}'_3$, have the same key
  $\mathsf{k}_3$. They correspond to the batteries in Figure
  \ref{fig:bom} that have the same path and value.  Figure
  \ref{fig:structures}b shows the partitioning
  $\psi(\mathsf{K}^{1..7}, V)$ for our running example. Three
  partitions exist with values \vtbstr{0x00}, \vtbstr{0x01}, and
  \vtbstr{0x03} for the discriminative value byte.
\end{example}

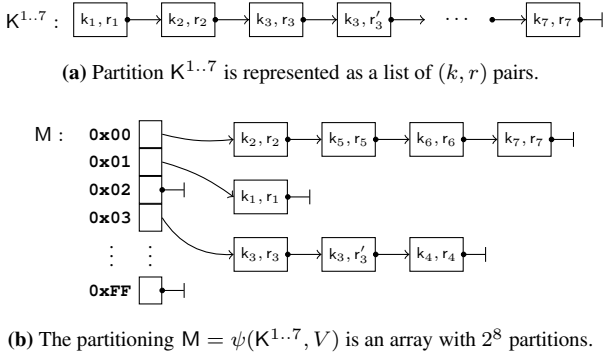
\begin{figure}[htb]
\begin{subfigure}{\linewidth}
\centering
\begin{tikzpicture}[
  scale=0.85,
  every node/.style={
    scale=0.85,
  },
  box/.style={
    draw,
    minimum height=15pt,
    anchor=west,
    font={\scriptsize},
  },
  ptr/.style={
    fill=colxe,
  },
]
  \node (label) {$\mathsf{K}^{1..7}:$};
  \node[box] (k1) at ([xshift=3pt] label.east) {$\mathsf{k}_1,\mathsf{r}_1$};
  \node[box] (k2) at ([xshift=15pt] k1.east) {$\mathsf{k}_2,\mathsf{r}_2$};
  \node[box] (k3) at ([xshift=15pt] k2.east) {$\mathsf{k}_3,\mathsf{r}_3$};
  \node[box] (k4) at ([xshift=15pt] k3.east) {$\mathsf{k}_3,\mathsf{r}'_3$};
  \node[box] (k8) at ([xshift=60pt] k4.east) {$\mathsf{k}_7,\mathsf{r}_7$};
  \node at ([xshift=30pt] k4.east) {$\cdots$};
  \draw[fill=black] (k1.east) circle (1pt) edge[->] (k2.west);
  \draw[fill=black] (k2.east) circle (1pt) edge[->] (k3.west);
  \draw[fill=black] (k3.east) circle (1pt) edge[->] (k4.west);
  \draw[fill=black] (k4.east) circle (1pt) edge[->] ([xshift=15pt] k4.east);
  \draw[fill=black] ([xshift=-15pt] k8.west) circle (1pt) edge[->] (k8.west);
  \draw[fill=black] (k8.east) circle (1pt) edge[-|] ([xshift=10pt] k8.east);
\end{tikzpicture}
\par\smallskip
\caption{Partition $\mathsf{K}^{1..7}$ is represented as a list of $(k,r)$ pairs.}
\end{subfigure}
\par\bigskip
\begin{subfigure}{\linewidth}
\centering
\begin{tikzpicture}[
  scale=0.85,
  every node/.style={
    scale=0.85,
  },
  box/.style={
    draw,
    minimum height=15pt,
    anchor=west,
    font={\scriptsize},
  },
  ptr/.style={
    fill=colxe,
  },
  slot/.style={
    draw,
    minimum height=12pt,
    minimum width=10pt,
    anchor=north,
  },
  slotidx/.style={
    anchor=east,
    font={\small \tt \bf},
  },
]
  \node[slot] (a00) at (0,0) {};
  \node[slot] (a01) at (a00.south) {};
  \node[slot] (a02) at (a01.south) {};
  \node[slot] (a03) at (a02.south) {};
  \node[slot] (a04) at ([yshift=-20pt] a03.south) {};
  %
  \node[slotidx] at (a00.west) {\texttt{0x00}};
  \node[slotidx] at (a01.west) {\texttt{0x01}};
  \node[slotidx] at (a02.west) {\texttt{0x02}};
  \node[slotidx] at (a03.west) {\texttt{0x03}};
  \node[slotidx] at (a04.west) {\texttt{0xFF}};
  %
  \node (dots) at ([yshift=-8pt] a03.south) {$\vdots$};
  \node at ([xshift=-13pt] dots.west) {$\vdots$};
  %
  \draw[fill=black] (a02.east) circle (1pt) edge[-|] ([xshift=10pt] a02.east);
  \draw[fill=black] (a04.east) circle (1pt) edge[-|] ([xshift=10pt] a04.east);
  %
  \node[box] (k2) at (1.3,-0.3) {$\mathsf{k}_2,\mathsf{r}_2$};
  \node[box] (k5) at ([xshift=15pt] k2.east) {$\mathsf{k}_5,\mathsf{r}_5$};
  \node[box] (k6) at ([xshift=15pt] k5.east) {$\mathsf{k}_6,\mathsf{r}_6$};
  \node[box] (k7) at ([xshift=15pt] k6.east) {$\mathsf{k}_7,\mathsf{r}_7$};
  \draw[fill=black] (k2.east) circle (1pt) edge[->] (k5.west);
  \draw[fill=black] (k5.east) circle (1pt) edge[->] (k6.west);
  \draw[fill=black] (k6.east) circle (1pt) edge[->] (k7.west);
  \draw[fill=black] (k7.east) circle (1pt) edge[-|] ([xshift=10pt] k7.east);
  %
  \node[box] (k1) at (1.3,-1.2) {$\mathsf{k}_1,\mathsf{r}_1$};
  \draw[fill=black] (k1.east) circle (1pt) edge[-|] ([xshift=10pt] k1.east);
  %
  \node[box] (k3) at (1.3,-2.1) {$\mathsf{k}_3,\mathsf{r}_3$};
  \node[box] (k3p) at ([xshift=15pt] k3.east) {$\mathsf{k}_3,\mathsf{r}'_3$};
  \node[box] (k4) at ([xshift=15pt] k3p.east) {$\mathsf{k}_4,\mathsf{r}_4$};
  \draw[fill=black] (k3.east) circle (1pt) edge[->] (k3p.west);
  \draw[fill=black] (k3p.east) circle (1pt) edge[->] (k4.west);
  \draw[fill=black] (k4.east) circle (1pt) edge[-|] ([xshift=10pt] k4.east);
  \draw (a00.east) edge[->,bend right=10] (k2.west);
  \draw (a01.east) edge[->,bend left=10]  (k1.west);
  \draw (a03.east) edge[->,bend right=30] (k3.west);
  %
  \node at ([xshift=-40pt] a00.west) {$\mathsf{M}:$};
\end{tikzpicture}
\par\smallskip
\caption{The partitioning $\mathsf{M} = \psi(\mathsf{K}^{1..7},V)$ is
an array with $2^8$ partitions.}
\end{subfigure}
\caption{Data structures used in Algorithm \ref{alg:bulk}.}
\label{fig:structures}
\end{figure}

Algorithm~\ref{alg:dsc} determines the discriminative byte for a
partition $K$.  Note that $\textsf{dsc\_inc}$ looks for the
discriminative byte starting from position $g$, where $g$ is a lower
bound for $\textsf{dsc}(K, D)$ as per Lemma~\ref{lemma:monotonicity}.
Also, looping through the bytes of the first key of $K$ is correct
even if there are shorter keys in $K$. Since we use prefix-free keys,
any shorter keys differ at some position, terminating the loop
correctly.


\begin{algorithm2e}[htb] \scriptsize
  Let $(k_i, r_i)$ be the first key in $K$\;
  \While{\textnormal{$g \leq \textsf{len}(k_i.D)$}}{
    \For{$(k_j,r_j) \in K$} {
      \lIf{$k_j.D[g] \neq k_i.D[g]$} {%
        \Return{$g$}%
      }
    }
    $g\text{++}$\;
  }
  \Return{$g$}
  \caption{$\textsf{dsc\_inc}(K, D, g)$}
  \label{alg:dsc}
\end{algorithm2e}

Algorithm~\ref{alg:psi} illustrates the computation of the
$\psi$-partitioning $M = \psi(K, D, g)$.  We pass the position $g$ of
the discriminative byte for dimension $D$ as an argument to $\psi$.
The discriminative byte determines the partition to which a key
belongs.

\begin{algorithm2e}[htb] \scriptsize
  Let $M$ be an array of $2^8$ empty lists\;
  \For{$(k_i,r_i) \in K$} {
    Move $(k_i,r_i)$ from partition $K$ to partition $M[k_i.D[g]]$\;
  }
  \Return{$M$}
  \caption{$\psi(K, D, g)$}
  \label{alg:psi}
\end{algorithm2e}

Algorithm~\ref{alg:bulk} recursively $\psi$-partitions $K$ and
alternates between the path and value dimensions.  In each call of
\textsf{BulkLoadRCAS} one new node $n$ is added to the index.  The
algorithm takes four parameters: (a) partition $K$, (b) dimension
$D \in \{P,V\}$ by which $K$ is partitioned, and the positions of the
previous discriminative (c) path byte $g_P$ and (d) value byte $g_V$.
For the first call, when no previous discriminative path and value
bytes exist, we set $g_P = g_V = 1$.
$\textsf{BulkLoadRCAS}(K, V, 1, 1)$ returns a pointer to the root
node of the new RCAS index.  We start by creating a new node $n$ (line
1) and determining the discriminative path and value bytes $g'_P$ and
$g'_V$ of $K$ (lines 3-4).  Lemma~\ref{lemma:monotonicity} guarantees
that the previous discriminative bytes $g_P$ and $g_V$ are valid lower
bounds for $g'_P$ and $g'_V$, respectively. Next, we determine the
current node's substrings $s_P$ and $s_V$ in lines 5-6 (see
Definition~\ref{def:dynint}). In lines 7-10 we check for the base case
of the recursion, which occurs when all discriminative bytes are
exhausted and $K$ cannot be partitioned further. In this case, all
remaining pairs $(k,r) \in K$ have the same key $k$.  Leaf node $n$
contains the references to nodes in the database with this particular
key $k$.  In lines 11 we check if $K$ can be partitioned in dimension
$D$. If this is not the case, since all keys have the same value in
dimension $D$, we $\psi$-partition $K$ in the alternate dimension
$\overline{D}$.  Finally, in lines 14-16 we iterate over all non-empty
partitions in $M$ and recursively call the algorithm for each
partition $M[b]$, alternating dimension $D$ in round-robin fashion.

\begin{algorithm2e}[htbp] \scriptsize
  Let $n$ be a new RCAS node\;
  Let $(k_i, r_i)$ be the first key in $K$\;
  $g'_P \gets \textsf{dsc\_inc}(K, P, g_P)$\;
  $g'_V \gets \textsf{dsc\_inc}(K, V, g_V)$\;
  $n.s_P \gets k_i.P[g_P, g'_P - 1]$\;
  $n.s_V \gets k_i.V[g_V, g'_V - 1]$\;
  \If (\tcc*[f]{\textnormal{$n$ is a leaf}}) {
    $g'_P > \textsf{len}(k_i.P) \wedge g'_V > \textsf{len}(k_i.V)$}
  {
    $n.D \gets \bot$\;
    \lFor{$(k_j,r_j) \in K$} {
      append $r_j$ to $n.\text{refs}$
    }
    \Return{$n$}\;
  }
  \lIf{\textnormal{$g'_D > \textsf{len}(k_i.D)$}}{
    $D \gets \overline{D}$
  }
  $n.D \gets D$\;
  $M \gets \psi(K,D, g'_D)$\;
  \For{\textnormal{$b \gets \texttt{0x00}$ \KwTo $\texttt{0xFF}$}} {
    \If{\textnormal{partition $M[b]$ is not empty}} {
      $n.\text{children}[b] \gets \textsf{BulkLoadRCAS}(
        M[b], \overline{D}, g'_P, g'_V)$\;
    }
  }
  \Return{$n$}
  \caption{\textsf{BulkLoadRCAS}($K$, $D$, $g_P$, $g_V$)}
  \label{alg:bulk}
\end{algorithm2e}

\begin{lemma}\label{lemma:bulk}
  Let $K$ be a set of composite keys and let
  $l = \max_{k \in K} \{\textsf{len}(k.P) + \textsf{len}(k.V) \}$ be
  the length of the longest key. The time complexity of Algorithm
  \ref{alg:bulk} is $O\left(l \cdot |K|\right)$.
\end{lemma}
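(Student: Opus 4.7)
The plan is to amortize the work of Algorithm~\ref{alg:bulk} against the individual key-reference pairs in $K$ and against the nodes of the resulting RCAS trie. First, I would bound the height of the trie: using Lemma~\ref{lemma:monotonicity} together with the dimension switch in line~11 (which ensures that the dimension we actually $\psi$-partition on still possesses a discriminative byte), every recursive invocation strictly increases $g_P + g_V$ in its children. Since $g_P = g_V = 1$ at the root and $g_P + g_V \le \textsf{len}(k.P) + \textsf{len}(k.V) + 2 \le l + 2$ at any leaf, the length $h$ of any root-to-leaf path is at most $l$. Moreover, a valid $\psi$-partitioning on a dimension with a discriminative byte produces at least two non-empty partitions, so every internal node has at least two children; combined with the bound of $|K|$ on the number of leaves, this yields $O(|K|)$ trie nodes overall.

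Next I would decompose the work of one call of \textsf{BulkLoadRCAS} on a partition $K_n$ into three parts: (a) the two calls to $\textsf{dsc\_inc}$ in lines~3--4, each costing $O\bigl((g'_D - g_D + 1) \cdot |K_n|\bigr)$ because they scan positions $g_D, g_D+1, \ldots, g'_D$ in every key of $K_n$; (b) the call to $\psi$ in line~13, costing $O(|K_n|)$ since it moves each pair in $K_n$ once; (c) the constant-bounded loops in lines~9 and~14--16, costing $O(1)$ per trie node and hence $O(|K|)$ globally.

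The main obstacle is bounding the total dsc\_inc contribution, which I would handle with a telescoping argument. Fix a pair $(k,r) \in K$ and a dimension $D$. The pair lies in $K_n$ at every ancestor $n$ of its leaf, and since the algorithm passes $g'_D$ of the parent as $g_D$ of the child, the ranges $[g_D(n), g'_D(n)]$ along the root-to-leaf path of $(k,r)$ form a chain of intervals overlapping by at most one byte. Hence
\[
  \sum_{n} \bigl(g'_D(n) - g_D(n) + 1\bigr)
  \;\le\; \bigl(g'_D(\text{leaf}) - 1\bigr) + h
  \;\le\; \textsf{len}(k.D) + h \;=\; O(l),
\]
where the sum ranges over the ancestors of the leaf that contains $(k,r)$. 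Summing over both dimensions and over all $|K|$ pairs bounds the total dsc\_inc work by $O(l \cdot |K|)$. The $\psi$ work is charged one unit per pair per ancestor and also sums to $O(l \cdot |K|)$ by the height bound; combining with the $O(|K|)$ node-overhead cost yields the claimed $O(l \cdot |K|)$ total.
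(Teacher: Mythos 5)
Your proposal is correct and follows essentially the same amortization as the paper's proof: the paper's ``Group 1'' (discriminative-byte computation charged against the bytes of each key, exploiting the monotonicity lower bound passed to $\textsf{dsc\_inc}$) corresponds to your telescoping bound on the $\textsf{dsc\_inc}$ ranges, and its ``Group 2'' ($\psi$-partitioning charged per key per level, with height at most $l$) corresponds to your per-ancestor charge for $\psi$. Your version is somewhat more careful --- it makes the one-byte overlap between consecutive $\textsf{dsc\_inc}$ ranges explicit and separately bounds the $O(|K|)$ node overhead --- but the underlying argument is the same.
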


\begin{proof}
  We split the computations performed in function \textsf{BulkLoadRCAS}
  in Algorithm \ref{alg:bulk} into two groups. The first group includes the
  computations of the discriminative bytes across \emph{all} recursive
  invocations of \textsf{BulkLoadRCAS} (lines 1--12).  The second
  group consists of the $\psi$-partitioning  (line 13) across
  \emph{all} recursive invocations of \textsf{BulkLoadRCAS}.

  \emph{Group 1}: \textsf{BulkLoadRCAS} exploits the monotonicity
  of the discriminative bytes (Lemma \ref{lemma:monotonicity}) and
  passes the lower bound $g$ to function $\textsf{dsc\_inc}(K, D,
  g)$. As a result, we scan each byte of $k.P$ and $k.V$ only once for
  each $k$ in $K$ to determine the discriminative bytes.  This amounts
  to one full scan over all bytes of all keys in $K$ across all
  invocations of \textsf{BulkLoadRCAS}.  The complexity of this group
  is
  $O\left(\sum_{k \in K}(\textsf{len}(k.P) + \textsf{len}(k.V))
  \right) = O\left(l \cdot |K| \right)$.

  \emph{Group 2}: Given the position $g$ of the discriminative byte
  computed earlier, $\psi(K,D,g)$ must only look at the value of this
  byte in dimension $D$ of each key $(k,r) \in K$ and append $(k,r)$ to the
  proper partition $M[k.D[g]]$ in constant time.  Thus, a single
  invocation of $\psi(K,D)$ can be performed in $O(|K|)$ time.  The
  partitioning $\psi(K,D)$ is disjoint and complete (see Definition
  \ref{def:partitioning}), i.e., $|K| = \sum_{K_i \in \psi(K,D)}
  |K_i|$. Therefore, on each level of the RCAS index at most $|K|$
  keys need to be partitioned, with a cost of $O(|K|)$.  In the worst
  case, the height of the RCAS index is $l$, in which case every
  single path and value byte of the longest key is discriminative.
  Therefore, the cost of partitioning $K$ across all levels of the
  index is $O(l \cdot |K|)$.

  Although we partition $K$ recursively for every discriminative byte,
  the partitions become smaller and smaller and on each level add up
  to at most $|K|$ keys. Thus, the costs of the operations in group 1
  and group 2 add up to $O(2 \cdot l \cdot |K|) = O(l \cdot |K|)$.
\end{proof}

The factor $l$ in the complexity of Algorithm~\ref{alg:bulk} is
typically much smaller than $\textsf{len}(k.P) + \textsf{len}(k.V)$ of
the longest key $k$. For instance, assuming a combined length of just
six bytes would already give us around 280 trillion potentially
different keys.  So, we would need a huge number of keys for every
byte to become a discriminative byte on each recursion level.

\subsection{Querying RCAS}
\label{sec:querying}

Algorithm~\ref{alg:query} shows the pseudocode for evaluating a CAS
query on an RCAS index. The function \textsf{CasQuery} gets called
with the current node $n$ (initially the root node of the index), a
path predicate consisting of a query path $q$, and a range
$[v_l, v_h]$ for the value predicate. Furthermore, we need two buffers
$\texttt{buff}_{P}$ and $\texttt{buff}_{V}$ (initially empty) that hold, respectively,
all path and value bytes from the root to the current node $n$.
Finally, we require state information $s$ to
evaluate the path and value predicates (we
provide details as we go along) and an answer set $W$ to collect the
results.\footnote{The parameters $n$, $W$, $q$, and $[v_l,v_h]$ are
  call-by-reference, the parameters $\texttt{buff}_{V}$,
  $\texttt{buff}_{P}$, and $s$ are call-by-value.}

\begin{algorithm2e}[htb]
\scriptsize
\SetInd{0.75em}{0.75em} 
\DontPrintSemicolon
\SetKwProg{Fn}{Function}{}{end}
\SetKw{KwNot}{not}
\SetKw{KwAnd}{and}
\SetArgSty{textnormal}
$\texttt{UpdateBuffers}(n,\texttt{buff}_V,\texttt{buff}_P)$\;
$\texttt{match}_V \gets \texttt{MatchValue}(\texttt{buff}_V,v_l,v_h,s,n)$\;
$\texttt{match}_P \gets \texttt{MatchPath}(\texttt{buff}_P,q,s,n)$\;
\uIf{$\texttt{match}_V = \texttt{MATCH}$ \KwAnd
  $\texttt{match}_P = \texttt{MATCH}$} {
  $\textsf{Collect}(n,W)$\;
}
\ElseIf{$\texttt{match}_V \neq \texttt{MISMATCH}$ \KwAnd
  $\texttt{match}_P \neq \texttt{MISMATCH}$} {
  \For{each matching child $c$ in $n$} {
    $s' \gets \texttt{Update}(s)$\;
    $\textsf{CasQuery}(c,q,[v_l,v_h],\texttt{buff}_V,\texttt{buff}_P,s',W)$\;
  }
}
\caption{$\textsf{CasQuery}(n, q, [v_l, v_h], \texttt{buff}_V,
\texttt{buff}_P, s, W)$}
\label{alg:query}
\end{algorithm2e}

First, we update the buffers $\texttt{buff}_{V}$ and
$\texttt{buff}_{P}$, adding the information in the fields $s_V$ and
$s_P$ of the current node $n$ (line 1).  Next, we match the query
predicates to the current node. Matching values (line 2) works
differently to matching paths (line 3), so we look at the two cases
separately.

To match the current (partial) value \texttt{buff}$_V$ against the
value range $[v_l,v_h]$, their byte strings must be binary comparable
(for a detailed definition of binary-comparability see \cite{VL13}).
Function \texttt{MatchValue} proceeds as follows. We compute the
longest common prefix between $\texttt{buff}_{V}$ and $v_l$ and
between $\texttt{buff}_{V}$ and $v_h$. We denote the position of the
first byte for which $\texttt{buff}_{V}$ and $v_l$ differ by
\texttt{lo} and the position of the first byte for which
$\texttt{buff}_{V}$ and $v_h$ differ by \texttt{hi}.  If
$\texttt{buff}_{V}[\texttt{lo}] < v_l[\texttt{lo}]$, we know that the
node's value lies outside of the range and we return
\texttt{MISMATCH}. Similarly, if $\texttt{buff}_{V}[\texttt{hi}] >
v_h[\texttt{hi}]$, the node's value lies outside of the upper bound
and we return \texttt{MISMATCH} as well.  If $n$ is a leaf node and
$v_l \leq \texttt{buff}_{V} \leq v_h$, we return \texttt{MATCH}. If
$n$ is not a leaf node and $v_l[\texttt{lo}] <
\texttt{buff}_{V}[\texttt{lo}]$ and $\texttt{buff}_{V}[\texttt{hi}] <
v_h[\texttt{hi}]$, we know that all values in the
subtree rooted at $n$ match and we also return \texttt{MATCH}. In all
other cases we cannot make a decision yet and return
\texttt{INCOMPLETE}. The values of \texttt{lo} and \texttt{hi} are
kept in the state to avoid recomputing the longest common prefix from
scratch for each node. Instead we can resume the search from the
previous values of \texttt{lo} and \texttt{hi}.

Function \texttt{MatchPath} matches the query path $q$ against the
current path prefix $\texttt{buff}_P$. It supports the wildcard symbol
\texttt{*} and the descendant-or-self axis \texttt{//} that match any
child and descendant node, respectively. As long as we do not
encounter any wildcards in the query path $q$, we directly compare (a
prefix of) $q$ with the current content of $\texttt{buff}_{P}$ byte by
byte. As soon as a byte does not match, we return \texttt{MISMATCH}.
If we are able to successfully match the complete query path $q$
against a complete path in $\texttt{buff}_{P}$ (both terminated by
\texttt{\$}), we return \texttt{MATCH}. Otherwise, we need to continue
and return \texttt{INCOMPLETE}. When we encounter a wildcard
\texttt{*} in $q$, we match it successfully to the corresponding label
in $\texttt{buff}_{P}$ and continue with the next label. A wildcard
\texttt{*} itself will not cause a mismatch (unless we try to match it
against the terminator \texttt{\$}), so we either return
\texttt{MATCH} if it is the final label in $q$ and $\texttt{buff}_{P}$
or \texttt{INCOMPLETE}.  Matching the descendant-axis \texttt{//} is
more complicated. We note the current position where we are in
$\texttt{buff}_{P}$ and continue matching the label after \texttt{//}
in $q$. If at any point we find a mismatch, we backtrack to the next
path separator after the noted position, thus skipping a label in
$\texttt{buff}_{P}$ and restarting the search from there.  Once
$\texttt{buff}_{P}$ contains a complete path, we can make a decision
between \texttt{MATCH} or \texttt{MISMATCH}.

The algorithm continues by checking the outcomes of the value and path
matching (lines 4 and 6). If both predicates match, we descend the
subtree and collect all references (line 5 and function
\textsf{Collect} in Algorithm~\ref{alg:collect}). If at least one of
the outcomes is \texttt{MISMATCH}, we immediately stop the search in
the current node, otherwise we continue recursively with the matching
children of $n$ (lines 7--9). Finding the matching children depends on
the dimension $n.D$ of $n$ and follows the same logic as described
above for \texttt{MatchValue} and \texttt{MatchPath}. If node
$n.D = P$ and we have seen a descendant axis in the query path, all
children of the current node match.

\begin{algorithm2e}[htb]
\scriptsize
\SetInd{0.75em}{0.75em} 
\DontPrintSemicolon
\SetKwProg{Fn}{Function}{}{end}
\SetKw{KwNot}{not}
\SetKw{KwAnd}{and}
\SetArgSty{textnormal}
\uIf{$n$ is a leaf} {
  add references $r$ in $n.\text{refs}$ to $W$\;
}
\Else {
  \For{each child $c$ in $n$} {
    $\textsf{Collect}(c,W)$\;
  }
}
\caption{$\textsf{Collect}(n, W)$}
\label{alg:collect}
\end{algorithm2e}

\begin{example}
  We consider an example CAS query with path
  $q=\ptstr{/bom/item//battery\$}$ and a value range from
  $v_l = 10^5 =$ \vtstr{00\,01\,86\,A0} to $v_h = 5 \cdot 10^5 =$
  \vtstr{00\,07\,A1\,20}.  We execute the query on the index
  depicted in Figure~\ref{fig:solution}.

  \begin{itemize}[leftmargin=15pt]

  \item Starting at the root node $\mathsf{n}_1$, we load \vtstr{00}
    and \ptstr{/bom/item/ca} into $\texttt{buff}_V$ and
    $\texttt{buff}_P$, respectively. Function \texttt{MatchValue}
    matches \vtstr{00} and returns \texttt{INCOMPLETE}.
    \texttt{MatchPath} also returns \texttt{INCOMPLETE}: even though
    it matches \ptstr{/bom/item}, the partial label \ptstr{ca} does
    not match \texttt{battery}, so \ptstr{ca} is skipped by the
    descendant axis. Since both functions return \texttt{INCOMPLETE},
    we have to traverse all matching children. Since $\mathsf{n}_1$ is
    a value node ($\mathsf{n}_1.D = V$), we look for all matching
    children whose discriminative value byte is between \vtstr{01} and
    \vtstr{07}.  Nodes $\mathsf{n}_8$ and $\mathsf{n}_9$ satisfy this
    condition.

  \item Node $\mathsf{n}_8$ is a leaf. $\texttt{buff}_P$ and
    $\texttt{buff}_V$ are updated and contain complete paths and
    values. Byte \vtstr{01} matches, but byte
    $\texttt{buff}_V[3] = \vtstr{0E} < \vtstr{86} = v_l[3]$. Thus,
    \texttt{MatchValue} returns a \texttt{MISMATCH}. So does
    \texttt{MatchPath}. The search discards $\mathsf{n}_8$.

  \item Next we look at node $\mathsf{n}_9$. We find that
    $v_l[2] < \texttt{buff}_V[2] < v_h[2]$, thus all values of
    $\mathsf{n}_9$'s descendants are within the bounds $v_l$ and
    $v_h$, and \texttt{MatchValue} returns \texttt{MATCH}.
    \texttt{MatchPath} skips the next bytes \ptstr{r/} due to the
    descendant axis and resumes matching from there. After skipping
    \ptstr{r/}, it returns \texttt{MATCH}, as \ptstr{battery\$}
    matches the query path until its end. Both predicates match,
    invoking \textsf{Collect} on $\mathsf{n}_9$, which traverses
    $\mathsf{n}_9$'s descendants $\mathsf{n}_{10}$ and
    $\mathsf{n}_{11}$ and adds references $\mathsf{r}_3$,
    $\mathsf{r}'_3$, and $\mathsf{r}_4$ to $W$.

  \end{itemize}
\end{example}

\rev{Twig queries \cite{RB17} with predicates on multiple attributes
are broken into smaller CAS queries. Each root-to-leaf branch of the
twig query is evaluated independently on an appropriate RCAS index and
the resulting sets $W$ are joined to produce the final result.  The
join requires that the references $r \in W$ contain structural
information about a node's position in the tree (e.g., an OrdPath
\cite{PN04} node-labeling scheme). A query optimizer can use our cost
model to choose which RCAS indexes are used in a query plan.}

\section{Experimental Evaluation}
\label{sec:experiments}

\subsection{Setup and Datasets}
\label{sec:exsetup}

\begin{table}[H]\centering\setlength{\tabcolsep}{4pt}
\caption{\rev{Dataset Statistics}}
\label{tab:datasets}
\rev{
{\scriptsize\begin{tabular}{cr|crrc}
  Dataset
  & \multicolumn{1}{c|}{Size}
  & \multicolumn{1}{c}{Attribute}
  & \multicolumn{1}{c}{No. of Keys}
  & \multicolumn{1}{c}{Unique Keys}
  & \multicolumn{1}{c}{Size of Keys}
    \\
  \hline
  ServerFarm
    & 3.0GB
    & size
    & 21'291'019
    &  9'345'668
    & 1.7GB
    \\
  XMark
    & 58.9GB
    & category
    &  60'272'422
    &   1'506'408
    &  3.3GB
    \\
  Amazon
    & 10.5GB
    & price
    &  6'707'397
    &  6'461'587
    &  0.8GB
\end{tabular}}
}
\end{table}

\begin{table*}
\caption{CAS queries with their result size and the number of keys that
match the path, respectively value  predicate.}
\label{tab:queries}
\vspace{-10pt}
\begin{center}
  {\scriptsize \begin{tabular}{|ll|rrrrrr|}%
  \multicolumn{2}{c}{Query $Q$}
    & \multicolumn{2}{c}{Result size $(\sigma)$}
    & \multicolumn{2}{c}{Matching paths $(\sigma_P)$}
    & \multicolumn{2}{c}{Matching values $(\sigma_V)$}
    \\
  \hline
  \multicolumn{8}{|c|}{\rev{Dataset: \textbf{ServerFarm:size}}} \\
  \hline
  $Q_1$ & $Q(\texttt{/usr/include//}, \texttt{@size} \ge 5000)$
    & 142253 & $(0.6\%)$ & 434564 & (2.0\%) & 7015066 & (32.9\%)
    \\
  $Q_2$ & $Q(\texttt{/usr/include//}, 3000 \leq \texttt{@size} \leq 4000)$
    &  46471 & $(0.2\%)$ & 434564 & (2.0\%) & 1086141 & (5.0\%)
    \\
  $Q_3$ & $Q(\texttt{/usr/lib//}, 0 \leq \texttt{@size} \leq 1000)$
    & 512497 & $(2.4\%)$  & 2277518 & (10.6\%) & 8403809 & (39.4\%)
    \\
  $Q_4$ & $Q(\texttt{/usr/share//Makefile}, 1000 \leq \texttt{@size}
    \leq 2000)$
    & 1193 & $(<0.1\%)$  & 6408 & $(<0.1\%)$ & 2494804 & (11.7\%)
    \\
  $Q_5$ & $Q(\texttt{/usr/share/doc//README}, 4000 \leq \texttt{@size}
    \leq5000)$
    & 521 & $(<0.1\%)$  & 24698 & (0.1\%) & 761513 & (3.5\%)
    \\
  $Q_6$ & $Q(\texttt{/etc//}, \texttt{@size} \ge 5000)$
    & 7292 & $(<0.1\%)$ & 97758 & (0.4\%) & 7015066 & (32.9\%)
    \\
  \hline
  \hline
  \multicolumn{8}{|c|}{\rev{Dataset: \textbf{XMark:category}}} \\
  \hline
  $Q_7$ & $Q(\texttt{/site/people//interest}, 0 \leq \texttt{@category} \leq 50000)$
    & 1910524 & $(3.1\%)$ & 19009723 & (31.5\%) & 6066546 & (10.0\%)
    \\
  $Q_8$ & $Q(\texttt{/site/regions/africa//}, 0 \leq \texttt{@category} \leq 50000)$
    & 104500 & $(0.1\%)$ &  1043247 & (1.7\%) & 6066546 & (10.0\%)
    \\
  \hline
  \hline
  \multicolumn{8}{|c|}{\rev{Dataset: \textbf{Amazon:price}}} \\
  \hline
  $Q_{9}$ & $Q(\texttt{/CellPhones\&Accessories//}, 10000 \leq
    \texttt{@price} \leq 20000)$
    & 2758 & $(< 0.1\%)$ & 291625 & $(4.3\%)$ & 324272 & $(4.8\%)$
    \\
  $Q_{10}$ & $Q(\texttt{/Clothing/Women/*/Sweaters//}, 7000 \leq
    \texttt{@price} \leq 10000)$
    & 239 & $(< 0.1\%)$ & 4654 & $(< 0.1\%)$ & 269936 & $(4.0\%)$
    \\
  \hline
\end{tabular}}
\end{center}
\end{table*}

We use a virtual Ubuntu 18.04 server with eight cores and 64GB of main
memory.  All algorithms are implemented in C++ by the same author and
were compiled with clang 6.0.0 using -O3.  Each reported runtime
measurement is the average of 100 runs. \rev{All indexes are kept in
  main memory. The code\footnote{\url{https://github.com/k13n/rcas}}
  and the
  datasets\footnote{\url{https://doi.org/10.5281/zenodo.3739263}} used
  in the experimental evaluation can be found online.}

\textbf{Datasets.} \rev{We use three datasets, the ServerFarm dataset
  that we collected ourselves, a product catalog with products from
  Amazon \cite{RH16}, and the synthetic XMark \cite{AS02} dataset at a
  scale factor of 500.} The ServerFarm dataset mirrors the file system
of 100 Linux servers.  For each server we installed a default set of
packages and randomly picked a subset of optional packages.  In total
there are 21 million files.  For each file we record the file's full
path, size, various timestamps (e.g., a file's change time
\texttt{ctime}), file type, extension etc.  \rev{The Amazon dataset
  \cite{RH16} contains products that are hierarchically categorized.}
For each experiment we index the paths in a dataset along with one of
its attributes. We use the notation \texttt{\$dataset:\$attribute} to
indicate which attribute in a dataset is indexed.  E.g.,
ServerFarm:size contains the path of each file in the ServerFarm
dataset along with its size. \rev{The datasets do not have to fit into
  main memory, but we assume that the indexed keys fit into main
  memory.  Table \ref{tab:datasets} shows a summary of the datasets.}

\textbf{Compared Approaches.} We compare our RCAS index based on
dynamic interleaving with the following approaches that can deal with
variable-length keys.  The path-value (PV) and value-path (VP)
concatenations are the two possible $c$-order curves \cite{SN17}.  The
next approach, termed ZO for $z$-order \cite{GM66,JO84}, maps
variable-length keys to a fixed length as proposed by Markl
\cite{VM99}. Each path label is mapped to a fixed length using a
surrogate function.  Since paths can have a different number of
labels, shorter paths are padded with empty labels to match the number
of labels in the longest path in the dataset.  The resulting paths
have a fixed length $l_P$ and are interleaved with values of length
$l_V$ such that $\lceil \sfrac{l_V}{l_P} \rceil$ value bytes are
followed by $\lceil \sfrac{l_P}{l_V} \rceil$ path bytes.  The
label-wise interleaving (LW) interleaves one byte of the value with
one label of the path.  We utilize our RCAS index to identify the
dimension of every byte of the variable-length interleaved keys.  The
same underlying data-structure is also used to store the keys
generated by PV, VP, and ZO.  Finally, we compare RCAS against the CAS
index in \cite{CM15} that builds a structural summary (DataGuide
\cite{RG97}) and a value index (B+ tree\footnote{We use the tlx B+
tree (\url{https://github.com/tlx/tlx}), used also by \cite{RB18,HZ18}
for comparisons.}) and joins them to answer CAS queries.  We term this
approach XML as it was proposed for XML databases.

\subsection{Impact of Datasets on RCAS's Structure}
\label{sec:expstructure}

In Figure \ref{exp:shape} we show how the shape (depth and width) of
the RCAS index adapts to the datasets.  Figure \ref{exp:shape}a shows
the distribution of the node depths in the RCAS index for the
ServerFarm:size dataset. Because of the trie-based structure not every
root-to-leaf path in RCAS has the same length (see also Figure
\ref{fig:solution}).  The deepest nodes occur on level 33, but most
nodes occur on levels 10 to 15 \rev{with an average node depth of
  13.2}. This is due to the different lengths of the paths in a file
system. Figure \ref{exp:shape}b shows the number of nodes for each
node type. Recall from Section \ref{sec:nodelayout} that RCAS, like
ART \cite{VL13}, uses inner nodes with a physical fanout of 4, 16, 48,
and 256 pointers depending on the actual fanout of a node to reduce
the space consumption. The type of a node $n$ and its dimension $n.D$
are related.  Path nodes typically have a smaller fanout than value
nodes. This is to be expected, since paths only contain printable
ASCII characters (of which there are about 100), while values span the
entire available byte spectrum.  Therefore, the most frequent node
type for path nodes is type 4, while for value nodes it is type 256,
see Figure \ref{exp:shape}b. Leaf nodes do not partition the data and
thus their dimension is set to $\bot$ according to Definition
\ref{def:dynint}.  The RCAS index on the ServerFarm:size dataset
contains more path than value nodes.  This is because in this dataset
there are about 9M unique paths as opposed to about 230k unique
values.  Thus, the values contain fewer discriminative bytes than the
paths and can be better compressed by the trie structure.

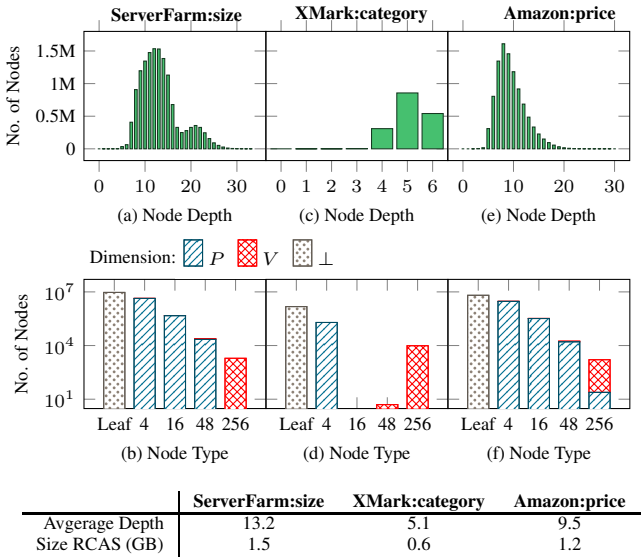
\begin{figure}[htb]
\begin{tikzpicture}
  \begin{groupplot}[
    col2,
    width=40mm,
    xticklabel style={
      align=center,
    },
    group style={
      columns=3,
      rows=2,
      horizontal sep=0pt,
      vertical sep=45pt,
      ylabels at=edge left,
      yticklabels at=edge left,
    },
    legend style={
      at={(0.5,1.00)},
      anchor=south west,
      legend columns=-1,
      draw=none,
      /tikz/every even column/.append style={column sep=3pt},
    },
  ]
  \node[anchor=center] at (rel axis cs:0.60,1.20)
    {\scriptsize \textbf{ServerFarm:size}};
  \node[anchor=center] at (rel axis cs:1.30,1.20)
    {\scriptsize \textbf{XMark:category}};
  \node[anchor=center] at (rel axis cs:2.05,1.20)
    {\scriptsize \textbf{Amazon:price}};
  \node[anchor=south west] at (rel axis cs:0.24,-0.91)
    {\scriptsize Dimension:};
  \nextgroupplot[
    ybar,
    bar width=1pt,
    ylabel={No. of Nodes},
    xlabel={(a) Node Depth},
    ytick scale label code/.code={},
    ytick={0,500000,1000000,1500000},
    yticklabels={0,0.5M,1M,1.5M},
    ymax=1800000,
    bar cycle list,
  ]
  \pgfplotstableread[col sep=comma]{experiments/bm_structure/depth_sf.txt}\tabA
  \addplot[barStructure] table[x=depth,y=nr_nodes,col sep=comma] {\tabA};
  %
  %
  %
  \nextgroupplot[
    ybar,
    bar width=8pt,
    xlabel={(c) Node Depth},
    xtick=data,
    ytick scale label code/.code={},
    ytick={0,500000,1000000,1500000},
    yticklabels={},
    ymax=1800000,
    bar cycle list,
  ]
  \pgfplotstableread[col sep=comma]{experiments/bm_structure/depth_xmark.txt}\tabC
  \addplot[barStructure] table[x=depth,y=nr_nodes,col sep=comma] {\tabC};
  %
  %
  \nextgroupplot[
    ybar,
    bar width=1pt,
    xlabel={\rev{(e) Node Depth}},
    ytick scale label code/.code={},
    ytick={0,500000,1000000,1500000},
    yticklabels={},
    ymax=1800000,
    bar cycle list,
  ]
  \pgfplotstableread[col sep=comma]{experiments/bm_structure/depth_amazon.txt}\tabC
  \addplot[barStructure] table[x=depth,y=nr_nodes,col sep=comma] {\tabC};
  \nextgroupplot[
    ybar stacked,
    bar width=8pt,
    ylabel={No. of Nodes},
    symbolic x coords={0,4,16,48,256},
    xticklabels={Leaf,4,16,48,256},
    xlabel={(b) Node Type},
    enlarge x limits=0.25,
    xtick=data,
    ymode=log,
    ymin=3,
    ymax=50000000,
    ybar legend,
  ]
  \pgfplotstableread[col sep=comma]{experiments/bm_structure/width_sf.txt}\tabB
  \addplot[barP] table[x=node_type,y=nr_p_nodes] {\tabB};
  \addplot[barV] table[x=node_type,y=nr_v_nodes] {\tabB};
  \addplot[barL] table[x=node_type,y=nr_l_nodes] {\tabB};
  \legend{$P$,$V$,$\bot$}
  \nextgroupplot[
    ybar stacked,
    bar width=8pt,
    symbolic x coords={0,4,16,48,256},
    xticklabels={Leaf,4,16,48,256},
    xlabel={(d) Node Type},
    enlarge x limits=0.25,
    xtick=data,
    ymode=log,
    ymin=3,
    ymax=50000000,
  ]
  \pgfplotstableread[col sep=comma]{experiments/bm_structure/width_xmark.txt}\tabD
  \addplot[barP] table[x=node_type,y=nr_p_nodes] {\tabD};
  \addplot[barV] table[x=node_type,y=nr_v_nodes] {\tabD};
  \addplot[barL] table[x=node_type,y=nr_l_nodes] {\tabD};
  \nextgroupplot[
    ybar stacked,
    bar width=8pt,
    symbolic x coords={0,4,16,48,256},
    xticklabels={Leaf,4,16,48,256},
    xlabel={\rev{(f) Node Type}},
    enlarge x limits=0.25,
    xtick=data,
    ymode=log,
    ymin=3,
    ymax=50000000,
  ]
  \pgfplotstableread[col sep=comma]{experiments/bm_structure/width_amazon.txt}\tabD
  \addplot[barP] table[x=node_type,y=nr_p_nodes] {\tabD};
  \addplot[barV] table[x=node_type,y=nr_v_nodes] {\tabD};
  \addplot[barL] table[x=node_type,y=nr_l_nodes] {\tabD};
  \end{groupplot}
\end{tikzpicture} \\[7pt]
\rev{{\hspace*{0.25cm} \scriptsize\begin{tabular}{r|ccc}
  & \textbf{ServerFarm:size}
  & \textbf{XMark:category}
  & \textbf{Amazon:price} \\
  \hline
  Avgerage~Depth & 13.2  &  5.1 & 9.5 \\
  Size RCAS (GB)  &  1.5  &  0.6 & 1.2 \\
\end{tabular}}}
\caption{Structure of the RCAS index}
\label{exp:shape}
\end{figure}

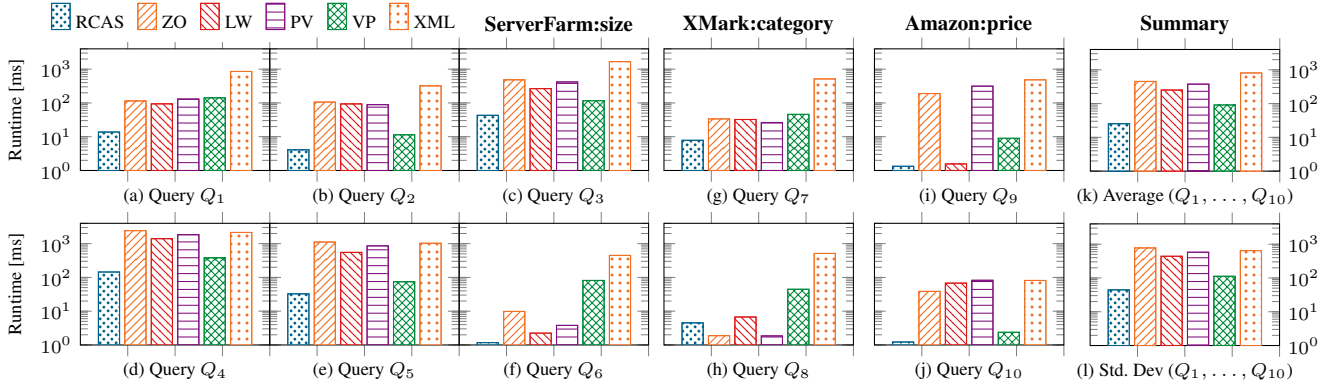
\begin{figure*}[htb]
\begin{tikzpicture}
  \begin{groupplot}[
    col2,
    width=41mm,
    height=32mm,
    ytick scale label code/.code={},
    xticklabel shift={-7pt},
    group style={
      columns=3,
      rows=2,
      horizontal sep=0pt,
      vertical sep=20pt,
      yticklabels at=edge left,
      ylabels at=edge left,
    },
    ybar,
    ymode=log,
    ymin=1,
    ymax=4000,
    ylabel={Runtime [ms]},
    legend style={
      at={(-0.2,1.05)},
      anchor=south west,
      legend columns=-1,
      draw=none,
      /tikz/every even column/.append style={column sep=3pt},
    },
  ]
  \node[anchor=south] at (rel axis cs:2.80,0.95)
    {\rev{\small \textbf{ServerFarm:size}}};
  \nextgroupplot[
    bar width=8pt,
    symbolic x coords={q},
    xticklabels={},
    xlabel={(a) Query $Q_1$},
    enlarge x limits=0.05,
  ]
  \pgfplotstableread[col sep=comma]{experiments/bm_query/q01.txt}\plot
  \addplot[barRCAS] table[x=query,y=dy]  {\plot};
  \addplot[barZO]   table[x=query,y=zo]  {\plot};
  \addplot[barLW]   table[x=query,y=lw]  {\plot};
  \addplot[barPV]   table[x=query,y=pv]  {\plot};
  \addplot[barVP]   table[x=query,y=vp]  {\plot};
  \addplot[barXML]  table[x=query,y=xml] {\plot};
  \legend{
    RCAS,
    ZO,
    LW,
    PV,
    VP,
    XML,
  }
  \nextgroupplot[
    bar width=8pt,
    symbolic x coords={q},
    xticklabels={},
    xlabel={(b) Query $Q_2$},
    enlarge x limits=0.05,
  ]
  \pgfplotstableread[col sep=comma]{experiments/bm_query/q02.txt}\plot
  \addplot[barRCAS] table[x=query,y=dy]  {\plot};
  \addplot[barZO]   table[x=query,y=zo]  {\plot};
  \addplot[barLW]   table[x=query,y=lw]  {\plot};
  \addplot[barPV]   table[x=query,y=pv]  {\plot};
  \addplot[barVP]   table[x=query,y=vp]  {\plot};
  \addplot[barXML]  table[x=query,y=xml] {\plot};
  \nextgroupplot[
    bar width=8pt,
    symbolic x coords={q},
    xticklabels={},
    xlabel={(c) Query $Q_3$},
    enlarge x limits=0.05,
  ]
  \pgfplotstableread[col sep=comma]{experiments/bm_query/q03.txt}\plot
  \addplot[barRCAS] table[x=query,y=dy]  {\plot};
  \addplot[barZO]   table[x=query,y=zo]  {\plot};
  \addplot[barLW]   table[x=query,y=lw]  {\plot};
  \addplot[barPV]   table[x=query,y=pv]  {\plot};
  \addplot[barVP]   table[x=query,y=vp]  {\plot};
  \addplot[barXML]  table[x=query,y=xml] {\plot};
  \nextgroupplot[
    bar width=8pt,
    symbolic x coords={q},
    xticklabels={},
    xlabel={(d) Query $Q_4$},
    enlarge x limits=0.05,
  ]
  \pgfplotstableread[col sep=comma]{experiments/bm_query/q04.txt}\plot
  \addplot[barRCAS] table[x=query,y=dy]  {\plot};
  \addplot[barZO]   table[x=query,y=zo]  {\plot};
  \addplot[barLW]   table[x=query,y=lw]  {\plot};
  \addplot[barPV]   table[x=query,y=pv]  {\plot};
  \addplot[barVP]   table[x=query,y=vp]  {\plot};
  \addplot[barXML]  table[x=query,y=xml] {\plot};
  \nextgroupplot[
    bar width=8pt,
    symbolic x coords={q},
    xticklabels={},
    xlabel={(e) Query $Q_5$},
    enlarge x limits=0.05,
  ]
  \pgfplotstableread[col sep=comma]{experiments/bm_query/q05.txt}\plot
  \addplot[barRCAS] table[x=query,y=dy]  {\plot};
  \addplot[barZO]   table[x=query,y=zo]  {\plot};
  \addplot[barLW]   table[x=query,y=lw]  {\plot};
  \addplot[barPV]   table[x=query,y=pv]  {\plot};
  \addplot[barVP]   table[x=query,y=vp]  {\plot};
  \addplot[barXML]  table[x=query,y=xml] {\plot};
  \nextgroupplot[
    bar width=8pt,
    symbolic x coords={q},
    xticklabels={},
    xlabel={(f) Query $Q_6$},
    enlarge x limits=0.05,
  ]
  \pgfplotstableread[col sep=comma]{experiments/bm_query/q06.txt}\plot
  \addplot[barRCAS] table[x=query,y=dy]  {\plot};
  \addplot[barZO]   table[x=query,y=zo]  {\plot};
  \addplot[barLW]   table[x=query,y=lw]  {\plot};
  \addplot[barPV]   table[x=query,y=pv]  {\plot};
  \addplot[barVP]   table[x=query,y=vp]  {\plot};
  \addplot[barXML]  table[x=query,y=xml] {\plot};
  \end{groupplot}
\end{tikzpicture}%
\hspace{-2mm}
\begin{tikzpicture}
  \begin{groupplot}[
    col2,
    width=41mm,
    height=32mm,
    ytick scale label code/.code={},
    xticklabel shift={-7pt},
    group style={
      columns=1,
      rows=2,
      vertical sep=20pt,
      horizontal sep=20pt,
      ylabels at=edge left,
    },
    ybar,
    ymode=log,
    ymin=1,
    ymax=4000,
    yticklabels={},
  ]
  \node[anchor=south] at (rel axis cs:1.00,1.05)
    {\rev{\small \textbf{XMark:category}}};
  \nextgroupplot[
    bar width=8pt,
    symbolic x coords={q},
    xticklabels={},
    xlabel={(g) Query $Q_7$},
    enlarge x limits=0.05,
  ]
  \pgfplotstableread[col sep=comma]{experiments/bm_query/q07.txt}\plot
  \addplot[barRCAS] table[x=query,y=dy]  {\plot};
  \addplot[barZO]   table[x=query,y=zo]  {\plot};
  \addplot[barLW]   table[x=query,y=lw]  {\plot};
  \addplot[barPV]   table[x=query,y=pv]  {\plot};
  \addplot[barVP]   table[x=query,y=vp]  {\plot};
  \addplot[barXML]  table[x=query,y=xml] {\plot};
  \nextgroupplot[
    bar width=8pt,
    symbolic x coords={q},
    xticklabels={},
    xlabel={(h) Query $Q_8$},
    enlarge x limits=0.05,
  ]
  \pgfplotstableread[col sep=comma]{experiments/bm_query/q08.txt}\plot
  \addplot[barRCAS] table[x=query,y=dy]  {\plot};
  \addplot[barZO]   table[x=query,y=zo]  {\plot};
  \addplot[barLW]   table[x=query,y=lw]  {\plot};
  \addplot[barPV]   table[x=query,y=pv]  {\plot};
  \addplot[barVP]   table[x=query,y=vp]  {\plot};
  \addplot[barXML]  table[x=query,y=xml] {\plot};
  \end{groupplot}
\end{tikzpicture}
\hspace{-2mm}
\begin{tikzpicture}
  \begin{groupplot}[
    col2,
    width=41mm,
    height=32mm,
    ytick scale label code/.code={},
    xticklabel shift={-7pt},
    group style={
      columns=1,
      rows=2,
      vertical sep=20pt,
      horizontal sep=20pt,
      ylabels at=edge left,
    },
    ybar,
    ymode=log,
    ymin=1,
    ymax=4000,
    yticklabels={},
    legend style={
      at={(0.0,1.05)},
      anchor=south west,
      legend columns=-1,
      draw=none,
      /tikz/every even column/.append style={column sep=3pt},
    },
  ]
  \node[anchor=south] at (rel axis cs:1.00,1.05)
    {\rev{\small \textbf{Amazon:price}}};
  \nextgroupplot[
    bar width=8pt,
    symbolic x coords={q},
    xticklabels={},
    xlabel={(i) Query $Q_9$},
    enlarge x limits=0.05,
  ]
  \pgfplotstableread[col sep=comma]{experiments/bm_query/q09.txt}\plot
  \addplot[barRCAS] table[x=query,y=dy]  {\plot};
  \addplot[barZO]   table[x=query,y=zo]  {\plot};
  \addplot[barLW]   table[x=query,y=lw]  {\plot};
  \addplot[barPV]   table[x=query,y=pv]  {\plot};
  \addplot[barVP]   table[x=query,y=vp]  {\plot};
  \addplot[barXML]  table[x=query,y=xml] {\plot};
  \nextgroupplot[
    bar width=8pt,
    symbolic x coords={q},
    xticklabels={},
    xlabel={(j) Query $Q_{10}$},
    enlarge x limits=0.05,
  ]
  \pgfplotstableread[col sep=comma]{experiments/bm_query/q10.txt}\plot
  \addplot[barRCAS] table[x=query,y=dy]  {\plot};
  \addplot[barZO]   table[x=query,y=zo]  {\plot};
  \addplot[barLW]   table[x=query,y=lw]  {\plot};
  \addplot[barPV]   table[x=query,y=pv]  {\plot};
  \addplot[barVP]   table[x=query,y=vp]  {\plot};
  \addplot[barXML]  table[x=query,y=xml] {\plot};
  \end{groupplot}
\end{tikzpicture}
\hspace{-2mm}
\begin{tikzpicture}
  \begin{groupplot}[
    col2,
    width=41mm,
    height=32mm,
    ytick scale label code/.code={},
    xticklabel shift={-7pt},
    group style={
      columns=1,
      rows=2,
      vertical sep=20pt,
      horizontal sep=20pt,
      ylabels at=edge left,
      yticklabels at=edge right,
    },
    ybar,
    ymode=log,
    ymin=1,
    ymax=4000,
    legend style={
      at={(0.0,1.05)},
      anchor=south west,
      legend columns=-1,
      draw=none,
      /tikz/every even column/.append style={column sep=3pt},
    },
  ]
  \node[anchor=south] at (rel axis cs:1.00,1.05)
    {\rev{\small \textbf{Summary}}};
  \nextgroupplot[
    bar width=8pt,
    symbolic x coords={avg},
    xticklabels={},
    xlabel={(k) Average ($Q_1,\ldots,Q_{10}$)},
    enlarge x limits=0.05,
  ]
  \pgfplotstableread[col sep=comma]{experiments/bm_query/rob_avg.txt}\plot
  \addplot[barRCAS] table[x=query,y=dy]  {\plot};
  \addplot[barZO]   table[x=query,y=zo]  {\plot};
  \addplot[barLW]   table[x=query,y=lw]  {\plot};
  \addplot[barPV]   table[x=query,y=pv]  {\plot};
  \addplot[barVP]   table[x=query,y=vp]  {\plot};
  \addplot[barXML]  table[x=query,y=xml] {\plot};
  \nextgroupplot[
    bar width=8pt,
    symbolic x coords={stdev},
    xticklabels={},
    xlabel={(l) Std.~Dev ($Q_1,\ldots,Q_{10}$)},
    enlarge x limits=0.05,
  ]
  \pgfplotstableread[col sep=comma]{experiments/bm_query/rob_stdev.txt}\plot
  \addplot[barRCAS] table[x=query,y=dy]  {\plot};
  \addplot[barZO]   table[x=query,y=zo]  {\plot};
  \addplot[barLW]   table[x=query,y=lw]  {\plot};
  \addplot[barPV]   table[x=query,y=pv]  {\plot};
  \addplot[barVP]   table[x=query,y=vp]  {\plot};
  \addplot[barXML]  table[x=query,y=xml] {\plot};
  \end{groupplot}
\end{tikzpicture}
%
\caption{(a)--(j): Runtime measurements for queries $Q_1$ to $Q_{10}$
in Table \ref{tab:queries}. (k)--(l): Average and standard deviation for queries $Q_1$ to
$Q_{10}$.}
\label{exp:rob_sfsize}
\end{figure*}

Figures \ref{exp:shape}c and \ref{exp:shape}d show the same
information for dataset XMark:category.  The RCAS index is more
shallow since there are only 7 unique paths and ca.~390k unique values
in a dataset of 60M keys. Thus the number of discriminative path and
value bytes is low and the index less deep.  Nodes of type 256 are
frequent (see Figure \ref{exp:shape}d) because of the larger number of
unique values.  While the XMark:category dataset contains 40M more
keys than the ServerFarm:size dataset, the RCAS index for the former
contains 1.5M nodes as compared to the 14M nodes for the latter.
\rev{The RCAS index for the Amazon:price dataset has similar
  characteristics as the ServerFarm:size dataset, see Figures
  \ref{exp:shape}e and \ref{exp:shape}f.}

\subsection{Robustness}
\label{sec:robustness}

Table \ref{tab:queries} shows a number of typical CAS queries with
their path and value predicates.  For example, query $Q_4$ looks for
all Makefiles underneath \texttt{/usr/share} that have a file size
between 1KB and 2KB.  In Table \ref{tab:queries} we report the
selectivity $\sigma$ of each query as well as path selectivity
$\sigma_P$ and value selectivity $\sigma_V$ of the queries' path and
value predicates, respectively.  The RCAS index avoids large
intermediate results that can be produced if an index prioritizes one
dimension over the other, or if it independently evaluates and joins
the results of the path and value predicates.  Query $Q_5$, e.g.,
returns merely 521 matches, but its path and value predicates return
intermediate results that are orders of magnitudes larger.

Figures \ref{exp:rob_sfsize}a to \ref{exp:rob_sfsize}f show that RCAS
consistently outperforms its competitors for queries $Q_1$ to $Q_6$
from Table \ref{tab:queries} on the ServerFarm:size dataset.
On these six queries
ZO and LW perform similarly as PV, which is
indicative for a high ``puff-pastry effect'' (see Section
\ref{sec:rw}) where one dimension is prioritized over another. In the
ServerFarm:size dataset ZO and LW prioritize the path dimension. The
reasons are twofold. The first reason is that the \texttt{size}
attribute is stored as a 64 bit integer since 32 bit integers cannot
cope with file sizes above $2^{32} \approx 4.3\text{GB}$.  The file
sizes in the dataset are heavily skewed towards small files (few bytes
or kilo-bytes) and thus have many leading zero-bytes. Many of these
most significant bytes do not partition the values. On the other hand,
the leading path bytes immediately partition the data: the second path
byte is discriminative since the top level of a file system contains
folders like \texttt{/usr}, \texttt{/etc}, \texttt{/var}. As a result,
ZO and LW fail to interleave at discriminative bytes and these
approaches degenerate to the level of PV. The second reason is
specific to ZO. To turn variable-length paths into fixed-length
strings, ZO maps path labels with a surrogate function and fills
shorter paths with trailing zero-bytes to match the length of the
longest path (see Section \ref{sec:exsetup}). We need 3 bytes per
label and the deepest path contains 21 labels, thus every path is
mapped to a length of 63 bytes and interleaved with the 8 bytes of the
values. Many paths in the dataset are shorter than the deepest path
and have many trailing zero-bytes. As explained earlier the values
(64-bit integers) have
many leading zero-bytes, thus the interleaved ZO string orders the
relevant path bytes before the relevant value bytes, further pushing
ZO towards PV.
Let us
look more closely at the results of query $Q_1$ in Figure
\ref{exp:rob_sfsize}a.
VP's runtime suffers because of the high value
selectivity $\sigma_V$.  XML performs badly because the intermediate
result sizes are one to two orders of magnitude larger than the final
result size.  RCAS with our dynamic interleaving is unaffected by the
puff-pastry effect in ZO, LW, and PV because it only interleaves at
discriminative bytes.  In RCAS the value selectivity (32\%) is
counter-balanced by the low path selectivity (2\%), thus avoiding VP's
pitfall. Lastly, RCAS does not materialize large intermediate results
as XML does.

Queries $Q_1$ and $Q_2$ have the same path predicate, but $Q_2$'s
value selectivity $\sigma_V$ is considerably lower. The query
performance of ZO, LW, and PV is unaffected by the lower $\sigma_V$
since $\sigma_P$ remains unchanged. This is further evidence that ZO
and LW prioritize the paths, just like PV. The runtime of VP and XML
benefit the most if the value selectivity $\sigma_V$ drops.  RCAS's
runtime still halves with respect to $Q_1$ and again shows the best
overall runtime.

Query $Q_3$ has the largest individual path and value selectivities,
and therefore produces the largest intermediate results. This is the
reason for its high query runtime.  RCAS has the best performance
since the final result size is an order of magnitude smaller than for
the individual path and value predicates.

Queries $Q_4$ and $Q_5$ look for particular files (\texttt{Makefile},
\texttt{README}) within large hierarchies.  Their low path selectivity
$\sigma_P$ should favor ZO, LW, and PV, but this is not the case.
Once Algorithm \ref{alg:query} encounters the descendant axis during
query processing, it needs to recursively traverse all children of
path nodes ($n.D = P$).  Fortunately, value nodes ($n.D = V)$ can
still be used to prune entire subtrees. Therefore, approaches that
alternate between discriminative path and values bytes, like RCAS, can
still effectively narrow down the search, even though the descendant
axis covers large parts of the index. Instead, approaches that
prioritize the paths (PV, ZO, LW) perform badly as they cannot prune
subtrees high up during query processing.

Query $Q_6$ has a very low path selectivity $\sigma_P$, but its value
selectivity $\sigma_V$ is high. This is the worst case for VP as it
can evaluate the path predicate only after traversing already a large
part of the index.  This query favors PV, LW, and ZO.  Nevertheless,
RCAS outperforms all other approaches.

The results for queries $Q_7$ and $Q_8$ on the XMark:category dataset
are shown in Figures \ref{exp:rob_sfsize}g and \ref{exp:rob_sfsize}h.
The gaps between the various approaches is smaller since the number of
unique paths is small (see Section \ref{sec:expstructure}). As a
result, the matching paths are quickly found by ZO, LW, and PV.
\rev{RCAS answers $Q_8$ in 4ms in comparison to 2ms for ZO and PV.}
VP performs worse on query $Q_8$ because of $Q_8$'s low $\sigma_P$ and
high $\sigma_V$.

\rev{Query $Q_9$ on dataset Amazon:price searches for all phones and
their accessories priced between \$100 and \$200. Selectivities
$\sigma_P$ and $\sigma_V$ are roughly 4.5\% whereas the total
selectivity is two orders of magnitude smaller. Figure
\ref{exp:rob_sfsize}i confirms that RCAS is the most efficient
approach.  Query $Q_{10}$ looks for all women's sweaters priced
between \$70 and \$100.  Sweaters exist for various child-categories
of category \texttt{Women}, e.g., \texttt{Clothing},
\texttt{Maternity}, etc. Query $Q_{10}$ uses the wildcard \texttt{*}
to match all of them.

Figures \ref{exp:rob_sfsize}k and \ref{exp:rob_sfsize}l show the
average runtime and the standard deviation for queries $Q_1$ to
$Q_{10}$.  RCAS is the most robust approach: it has the lowest average
runtime and standard deviation.

}

\subsection{\rev{Evaluation of Cost Model}}
\label{sec:expcostmodel}

\rev{This section uses the cost function
  $\widehat{C}(o, h, \phi, \varsigma_P, \varsigma_V)$ from
  Section~\ref{sec:costmodel} to estimate the cost of answering a
  query with RCAS.  We explain the required steps for query $Q_1$ on
  dataset ServerFarm:size (see Table \ref{tab:queries}).  First, we
  choose the alternating pattern of discriminative bytes in RCAS's
  dynamic interleaving by setting $\phi$ to $(V,P,V,P,\ldots)$.  For
  determining $h$ and $o$ we consider $|K|$ unique keys. Since each
  leaf represents a key, there are $o^h = |K|$ leaves. We set $h$ to
  the average depth of a node in the RCAS index (truncated to the next
  lower integer) and fanout $o$ to $\sqrt[h]{|K|}$. For dataset
  ServerFarm:size we have $|K| = 9.3\text{M}$ and $h=13$ (see Table
  \ref{tab:datasets} and Figure \ref{exp:shape}), thus
  $o = \sqrt[13]{9.3\text{M}} = 3.43$.  This is consistent with Figure
  \ref{exp:shape}a that shows that the most frequent node type in RCAS
  has a fanout of at most four.  Next we look at parameters
  $\varsigma_P$ and $\varsigma_V$. In our cost model, a query
  traverses a constant fraction $\varsigma_D$ of the children on each
  level of dimension $D$, corresponding to a selectivity of
  $\sigma_D = \varsigma_D \cdot \varsigma_D \cdot \ldots =
  \varsigma_D^N$ over all $N$ levels of dimension $D$.  $N$ is the
  number of discriminative bytes in dimension $D$. Thus, if a CAS
  query has a value selectivity of $\sigma_V$ we set
  $\varsigma_V = \sqrt[N]{\sigma_V}$. The value selectivity of query
  $Q_1$ is $\sigma_V = 32.9\%$ (see Table \ref{tab:queries}); the
  number of discriminative value bytes in $\phi$ is
  $N=\lceil \sfrac{h}{2} \rceil = \lceil \sfrac{13}{2} \rceil = 7$ (we
  use the ceiling since we start partitioning by $V$ in $\phi$), thus
  $\varsigma_V = \sqrt[7]{0.329}= 0.85$.  $\varsigma_P$ is determined
  likewise and yields
  $\varsigma_P = \sqrt[\lfloor \sfrac{13}{2} \rfloor]{0.02} = 0.52$
  for $Q_1$.

  We refine this cost model for path predicates containing the
  descendant axis \texttt{//} or the wildcard \texttt{*} followed by
  further path steps.  In such cases we use the path selectivity of
  the path predicate up to the first descendant axis or wildcard.  For
  example, for query $Q_4$ with path predicate
  \texttt{/usr/share//Makefile} and $\sigma_P = 0.03\%$, we use the
  path predicate \texttt{/usr/share//} with a selectivity of
  $\sigma_P = 44\%$.  This is so because the low path selectivity of
  the original predicate is not representative for the number of nodes
  that RCAS must traverse.  As soon as RCAS hits a descendant axis in
  the path predicate it can only use the value predicate to prune
  nodes during the traversal (see Section \ref{sec:querying}).

  \begin{table}[htb]
    \caption{\rev{Estimated cost $\widehat{C}$ and true cost ${C}$ for
        queries $Q_1$ to $Q_{10}$.}}
    \label{tab:costeval}
    \setlength\tabcolsep{4pt} \centering \hfill
    \begin{minipage}{0.49\linewidth}
      {\scriptsize%
        \begin{tabular}{c|S[table-format=6]S[table-format=6]S[table-format=1.2]}
          & {$\widehat{C}$}
          & {$C$}
          & {$E$}
          \\ \hline
          $Q_{1}$  & 105793 &  83190 & 1.27 \\
          $Q_{2}$  &  19157 &  28943 & 1.51 \\
          $Q_{3}$  & 542458 & 273824 & 1.98 \\
          $Q_{4}$  & 710128 & 784068 & 1.10 \\
          $Q_{5}$  & 111139 & 146124 & 1.31 \\
        \end{tabular}}
    \end{minipage}
    \hfill
    \begin{minipage}{0.49\linewidth}
      {\scriptsize%
        \begin{tabular}{c|S[table-format=6]S[table-format=6]S[table-format=1.2]}
          & {$\widehat{C}$}
          & {$C$}
          & {$E$}
          \\ \hline
          $Q_{6}$  &   9920 &   3062 & 3.24 \\
          $Q_{7}$  &  34513 &  30365 & 1.14 \\
          $Q_{8}$  &  18856 &  38247 & 2.03 \\
          $Q_{9}$  &  20421 &   4219 & 4.84 \\
          $Q_{10}$ &  17993 &  10698 & 1.68 \\
        \end{tabular}}
    \end{minipage}
    \hfill
  \end{table}

  In Table \ref{tab:costeval} we compare the estimated cost
  $\widehat{C}$ for the ten queries in Table \ref{tab:queries} with
  the true cost of these queries in RCAS. In addition to the estimated
  cost $\widehat{C}$ and the true cost $C$, we show the factor
  $E = \sfrac{\max(\widehat{C}, C)}{\min(\widehat{C}, C)}$ by which
  the estimate is off.  On average the cost model and RCAS are off by
  only a factor of two.

  Figure~\ref{fig:extuning} illustrates that the default values we
  have chosen for the parameters of $\widehat{C}$ yield near optimal
  results in terms of minimizing the average error $\overline{E}$ for queries
  $Q_1$ to $Q_{10}$. On the x-axis, we plot the deviation from the
  default value of a parameter. The values for $o$ and $h$ are spot
  on.  We overestimate the true path and value selectivities by a
  small margin; decreasing $\Delta\varsigma_P$ and $\Delta\varsigma_V$
  by 0.04 improves the error marginally.

\begin{figure}[htb]
\begin{tikzpicture}
  \begin{groupplot}[
    col2,
    height=28mm,
    width=34.5mm,
    group style={
      columns=4,
      rows=1,
      horizontal sep=0pt,
      ylabels at=edge left,
      yticklabels at=edge left,
    },
    legend style={
      at={(0.0,1.05)},
      anchor=south west,
      legend columns=6,
      draw=none,
    },
    ylabel={Avg.~Error $\overline{E}$},
    ymin=0,
    ymax=10,
  ]
  \nextgroupplot[
    xlabel={(a) $\Delta o$},
    xmin=-1.25,
    xmax=+1.25,
    xtick={-1,0,1},
  ]
  \pgfplotstableread[col sep=comma]{experiments/bm_calibration/o.txt}\plot
  \addplot[rcas]  table[x=delta_o,y=rel_error] {\plot};
  \nextgroupplot[
    xlabel={(b) $\Delta h$},
    xmin=-1.75,
    xmax=+1.75,
  ]
  \pgfplotstableread[col sep=comma]{experiments/bm_calibration/h.txt}\plot
  \addplot[rcas]  table[x=delta_h,y=rel_error] {\plot};
  \nextgroupplot[
    xlabel={(c) $\Delta \varsigma_P$},
    xmin=-0.2,
    xmax=+0.2,
    xtick={-0.1,0,0.1},
  ]
  \pgfplotstableread[col sep=comma]{experiments/bm_calibration/sp.txt}\plot
  \addplot[rcas]  table[x=delta_sp,y=rel_error] {\plot};
  \nextgroupplot[
    xlabel={(d) $\Delta \varsigma_V$},
    xmin=-0.2,
    xmax=+0.2,
    xtick={-0.1,0,0.1},
  ]
  \pgfplotstableread[col sep=comma]{experiments/bm_calibration/sv.txt}\plot
  \addplot[rcas]  table[x=delta_sv,y=rel_error] {\plot};
  \end{groupplot}
\end{tikzpicture}
\caption{Calibrating the cost model}
\label{fig:extuning}
\end{figure}
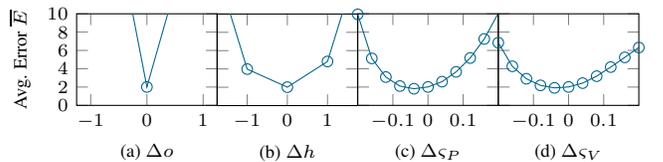
}

\subsection{Space Consumption and Scalability}
\label{sec:scalability}

\rev{Figure~\ref{exp:space} illustrates the space consumption of the
  indexes for our datasets.  RCAS, ZO, LW, PV, and VP all use the same
  underlying trie structure and have similar space requirements.  The
  XMark:category dataset can be compressed more effectively using
  tries because the number of unique paths and values is low (see
  Section \ref{sec:expstructure}) and common prefixes need to be
  stored only once. The trie indexes on Amazon:price do not compress
  the data as well as on the other two datasets since the lengthy
  titles of products do not share long common prefixes.  The XML index
  needs to maintain two structures, a DataGuide and a B+
  tree. Consequently, its space consumption is higher.  In addition,
  prefixes are not compressed as effectively in a B+ tree as they are
  in a trie. }

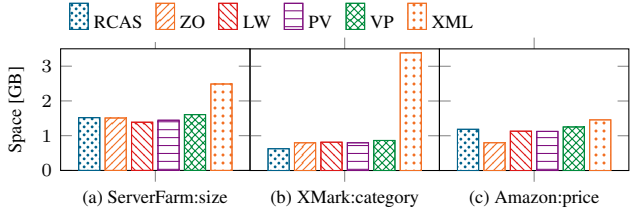
\begin{figure}[htb]
\begin{tikzpicture}
  \begin{groupplot}[
    col2,
    width=41mm,
    height=32mm,
    xtick=data,
    xticklabel style={
      align=center,
    },
    group style={
      columns=3,
      rows=1,
      horizontal sep=0pt,
      yticklabels at=edge left,
      ylabels at=edge left,
    },
    ybar,
    legend style={
      at={(0.0,1.05)},
      anchor=south west,
      legend columns=6,
      draw=none,
      /tikz/every even column/.append style={column sep=3pt},
    },
    ymin=1,
    ymax=3500000000,
    ylabel={Space [GB]},
    xticklabels={{{}}},
    ytick scale label code/.code={},
    xticklabel style = {yshift=+6pt},
    enlarge x limits=0.05,
  ]
  \nextgroupplot[
    bar width=8pt,
    xlabel={(a) ServerFarm:size},
    symbolic x coords={s,x,e},
    xticklabels={},
  ]
  \legend{
    RCAS,
    ZO,
    LW,
    PV,
    VP,
    XML,
  }
  \pgfplotstableread[col sep=comma]{experiments/bm_space/sf.txt}\tabA
  \addplot[barRCAS] table[x=dataset,y=dy]  {\tabA};
  \addplot[barZO]   table[x=dataset,y=zo]  {\tabA};
  \addplot[barLW]   table[x=dataset,y=lw]  {\tabA};
  \addplot[barPV]   table[x=dataset,y=pv]  {\tabA};
  \addplot[barVP]   table[x=dataset,y=vp]  {\tabA};
  \addplot[barXML]  table[x=dataset,y=xml] {\tabA};
  %
  %
  \nextgroupplot[
    bar width=8pt,
    xlabel={(b) XMark:category},
    xticklabels={},
    symbolic x coords={s,x,e},
    xticklabels={},
  ]
  \pgfplotstableread[col sep=comma]{experiments/bm_space/xmark.txt}\tabA
  \addplot[barRCAS] table[x=dataset,y=dy]  {\tabA};
  \addplot[barZO]   table[x=dataset,y=zo]  {\tabA};
  \addplot[barLW]   table[x=dataset,y=lw]  {\tabA};
  \addplot[barPV]   table[x=dataset,y=pv]  {\tabA};
  \addplot[barVP]   table[x=dataset,y=vp]  {\tabA};
  \addplot[barXML]  table[x=dataset,y=xml] {\tabA};
  %
  %
  \nextgroupplot[
    bar width=8pt,
    xlabel={\rev{(c) Amazon:price}},
    xticklabels={},
    symbolic x coords={s,x,e},
    xticklabels={},
  ]
  \pgfplotstableread[col sep=comma]{experiments/bm_space/amazon.txt}\tabA
  \addplot[barRCAS] table[x=dataset,y=dy]  {\tabA};
  \addplot[barZO]   table[x=dataset,y=zo]  {\tabA};
  \addplot[barLW]   table[x=dataset,y=lw]  {\tabA};
  \addplot[barPV]   table[x=dataset,y=pv]  {\tabA};
  \addplot[barVP]   table[x=dataset,y=vp]  {\tabA};
  \addplot[barXML]  table[x=dataset,y=xml] {\tabA};
  \end{groupplot}
\end{tikzpicture}
\caption{Space consumption}
\label{exp:space}
\end{figure}

In Figure~\ref{exp:scale} we analyze the scalability of the indexes in
terms of their space consumption and bulk-loading time as we increase the
number of
$(k,r)$ pairs to 100M.  We scale the ServerFarm:size dataset as needed
to reach a certain number of
$(k,r)$ pairs.  The space consumption (Figure~\ref{exp:scale}a) and
the index bulk-loading time (Figure~\ref{exp:scale}b) are linear in the
number of keys. The small drop for RCAS, ZO, LW, PV, and VP in
Figure~\ref{exp:scale}a is due to their trie structure.  These indexes
compress the keys more efficiently when we scale the dataset from
originally 21M keys to 100M keys (we do so by duplicating keys).  They
store the path $k.P$ and value $k.V$ only once for pairs
$(k,r_i)$ and $(k,r_j)$ with the same key
$k$. \rev{Figure \ref{exp:scale}b confirms that the time complexity of
  Algorithm~\ref{alg:bulk} to bulk-load the RCAS index is linear in
  the number of keys (see Lemma \ref{lemma:bulk}). Bulk-loading RCAS
  is a factor of two slower than bulk-loading indexes for static
  interleavings, but a factor of two faster than bulk-loading the XML
  index.  While RCAS takes longer to create the index it offers orders
  of magnitude better query performance as shown before.}
Figure~\ref{exp:scale} shows that the RCAS index for 100M keys requires 2GB
of memory and can be bulk-loaded in less than 5 minutes.

\begin{figure}[htb]
\begin{tikzpicture}
  \begin{groupplot}[
    col2,
    width=45mm,
    height=32mm,
    group style={
      columns=2,
      rows=2,
      horizontal sep=35pt,
    },
    legend style={
      at={(0.0,1.05)},
      anchor=south west,
      legend columns=6,
      draw=none,
    },
    xmode=log,
    ymode=log,
    xtick={10000, 1000000, 100000000},
  ]
  \nextgroupplot[
    xlabel={(a) No.~of $(k,r)$ pairs},
    ylabel={Space [MB]},
    ytick={0.001, 1, 1000},
    table/spacecol/.style={
      col sep=comma,
      x=size,
      y expr=\thisrow{#1}/1000000,
    },
    ytick={0.1,10,1000},
    yticklabels={0.1,10,1000},
  ]
  \addlegendimage{rcas}   \addlegendentry{RCAS}
  \addlegendimage{caszo}  \addlegendentry{ZO}
  \addlegendimage{caslw}  \addlegendentry{LW}
  \addlegendimage{caspv}  \addlegendentry{PV}
  \addlegendimage{casvp}  \addlegendentry{VP}
  \addlegendimage{casxml} \addlegendentry{XML}
  \pgfplotstableread[col sep=comma]{experiments/bm_scalability/space.txt}\plot
  \addplot[casxml] table[spacecol={xml}] {\plot};
  \addplot[casvp]  table[spacecol={vp}]  {\plot};
  \addplot[caspv]  table[spacecol={pv}]  {\plot};
  \addplot[caslw]  table[spacecol={lw}]  {\plot};
  \addplot[caszo]  table[spacecol={zo}]  {\plot};
  \addplot[rcas]   table[spacecol={dy}]  {\plot};
  %
  %
  \nextgroupplot[
    xlabel={(b) No.~of $(k,r)$ pairs},
    ylabel={Runtime [sec]},
    table/spacecol/.style={
      x=size,
      y expr=\thisrow{#1}/1000,
    },
    ytick={0.001,0.1,10,1000},
    yticklabels={0.001,0.1,10,1000},
  ]
  \pgfplotstableread[col sep=comma]{experiments/bm_scalability/time.txt}\plot
  \addplot[casxml] table[spacecol={xml}] {\plot};
  \addplot[casvp]  table[spacecol={vp}]  {\plot};
  \addplot[caspv]  table[spacecol={pv}]  {\plot};
  \addplot[caslw]  table[spacecol={lw}]  {\plot};
  \addplot[caszo]  table[spacecol={zo}]  {\plot};
  \addplot[rcas]   table[spacecol={dy}]  {\plot};
  \end{groupplot}
\end{tikzpicture}
\caption{Space consumption and bulk-loading time}
\label{exp:scale}
\end{figure}

\subsection{Summary}

We conclude with a summary of the key findings of our experimental
evaluation. First, RCAS shows the most robust query performance for a
wide set of CAS queries with varying selectivities: it exhibits the
most stable runtime in terms of average and standard deviation,
outperforming other state-of-the-art approaches by up to two orders of
magnitude. \rev{Second, our cost model yields a good estimate for the
true cost of a CAS query on the RCAS index.} Third, because of the
trie-based structure the RCAS index consumes less space than its
competitors, requiring only a third of the space used by a B+
tree-based approach.  The space consumption and bulk-loading time are
linear in the number of keys, allowing it to scale to a large number
of keys.


\section{Conclusion and Outlook}
\label{sec:conclusion}

We propose the Robust Content-and-Structure (RCAS) index for
semi-structured hierarchical data, offering a well-balanced
integration of paths and values in a single index. Our scheme avoids
prioritizing a particular dimension (paths or values), making the
index robust against queries with high individual selectivities
producing large intermediate results and a small final result. We
achieve this by developing a novel dynamic interleaving scheme that
exploits the properties of path and value attributes. Specifically, we
interleave paths and values at their \emph{discriminative bytes},
which allows our index to adapt to a given data distribution. In
addition to proving important theoretical properties, such as the
monotonicity of discriminative bytes and robustness, we show in our
experimental evaluation impressive gains: utilizing dynamic
interleaving our RCAS index outperforms state-of-the-art approaches by
up to two order of magnitudes on real-world and
synthetic datasets.

Future work points into several directions.  We plan to apply RCAS on
the largest archive of software source code, the Software Heritage
Dataset \cite{RC17,AP19}.  On the technical side we are working on
supporting incremental insertions and deletions in the RCAS index.
\rev{It would also be interesting to explore a disk-based RCAS index
based on a disk-based trie \cite{NA09}.} Further, we consider making
path predicates more expressive by, e.g., allowing arbitrary regular
expressions as studied by Baeza-Yates et al.~\cite{GG96}. \rev{On the
theoretical side it would be interesting to investigate the length of
the dynamic interleavings for different data distributions.}



\balance
\bibliographystyle{abbrv}
\bibliography{bibliography}

\end{document}